\crefname{equation}{Eq.}{Eqs.}
\crefname{section}{Sec.}{Sections}
\crefname{figure}{Fig.}{Figs.}
\def\be{\begin{equation}}
\def\ee{\end{equation}}
\def\bea{\begin{eqnarray}}
\def\eea{\end{eqnarray}}
\def\bma{\begin{mathletters}}
\def\ema{\end{mathletters}}
\def\q0{\underline{0}}
\def\C{{\mathbb C}}
\def\id{{\mathbb I}}
\def\W{{\cal W}}
\def\M{{\cal M}}
\def\R{\mathbb{R}}
\def\tr{\mbox{tr}}
\def\W{{\cal W}}
\def\one{\leavevmode\hbox{\small1\normalsize\kern-.33em1}}
\def\bra#1{\langle#1|} \def\ket#1{|#1\rangle}
\def\proj#1{\ket{#1}\!\bra{#1}}
\newcommand{\new}[1]{{\color{black} #1}}
\newtheorem{theo}{Theorem}
\newtheorem{lemma}[theo]{Lemma}
\def\id{{\mathbb I}}
\begin{document}

\title{Optimization of time-ordered processes in the finite and asymptotic regime}
\author{Mirjam Weilenmann}
\affiliation{Département de Physique Appliquée, Université de Genève, Genève, Switzerland}
\affiliation{Institute for Quantum Optics and Quantum Information--IQOQI Vienna, Austrian Academy of Sciences, Boltzmanngasse 3, 1090 Vienna, Austria}
\author{Costantino Budroni}
\affiliation{Department of Physics ``E. Fermi'' University of Pisa, Largo B. Pontecorvo 3, 56127 Pisa, Italy}
\author{Miguel Navascués}
\affiliation{Institute for Quantum Optics and Quantum Information--IQOQI Vienna, Austrian Academy of Sciences, Boltzmanngasse 3, 1090 Vienna, Austria}

\begin{abstract}
Many problems in quantum information theory can be formulated as optimizations over the sequential outcomes of dynamical systems subject to unpredictable external influences. Such problems include many-body entanglement detection through adaptive measurements, computing the maximum average score of a preparation game over a continuous set of target states and limiting the behavior of a (quantum) finite-state automaton. In this work, we introduce tractable relaxations of this class of optimization problems. To illustrate their performance, we use them to: (a) compute the probability that a finite-state automaton outputs a given sequence of bits; (b) develop a new many-body entanglement detection protocol; (c) let the computer \emph{invent} an adaptive protocol for magic state detection. As we further show, the maximum score of a sequential problem in the limit of infinitely many time steps is in general incomputable. Nonetheless, we provide general heuristics to bound this quantity and show that they provide useful estimates in relevant scenarios.
 
\end{abstract}

\maketitle

\section{Introduction}

In quantum physics, we are usually concerned with manipulating quantum systems, whether we interact with them to cause them to reach (or remain in) a specific quantum state or measure them in order to observe or certify some of their properties. Finding new, better ways to perform tasks with time-ordered operations lies at the heart of current research in quantum information science. Quantum computing relies on a sequential application of gates to a quantum system and communication protocols aided by quantum systems depend on performing operations on these systems and exchanging information in a specific order. Sequences of operations are also crucial in less evident situations, e.g., specific sequences of pulses  are applied in order to read out a superconducting qubit~\cite{superconducting}. 

Traditionally, the problem of finding new ways (or sequences of operations) to complete a specific task has been dependent on the ingenuity of scientists to come up with new protocols. More recently, this type of problem is also tackled via machine-learning techniques, which require training a neural network to generate the right sequence of operations. However, a general method to optimize over sequential strategies or even a criterion to decide whether one such sequential protocol is optimal for a certain task is lacking. 

Part of the difficulty in solving these problems lies in the fact that, when operations are performed sequentially, protocols can be made adaptive, i.e.\ future operations may depend on the result of previous ones. Adaptiveness can lead to great advantages; e.g., it minimizes statistical errors in entanglement certification~\cite{prep_games} and quantum state detection~\cite{Knill}. However, it also increases the variety of protocols that must be considered for a specific task, and thus the complexity of optimizing over them. Indeed, while optimization techniques are well established in the context of single-shot or independently repeated procedures (e.g.\ in applications of nonlocality~\cite{npa,npa2} or for witnessing entanglement~\cite{Moroder}), developing them for sequential protocols remains a challenge.

In this work, we make progress on this problem. Specifically, we introduce a general technique to analyze and optimize a class of time-ordered processes that we call \emph{sequential models}. These are protocols in $N$ rounds, where in each round an interaction occurs and changes the state of a system of interest 
possibly depending on unknown or uncontrolled variables. The progress toward a certain goal is quantified by a reward that is generated at each round of the protocol and is the object of our optimization procedure.

Our first result is a technique that allows us to upper bound the maximum total reward that a sequential model can generate in $N$ rounds through a complete hierarchy of increasingly difficult convex-optimization problems. To illustrate the performance of the hierarchy, we apply it to upper bound the type-I error of an 
adaptive protocol for many-body entanglement detection~\cite{Saggio2019, prep_games}. The technique generates seemingly tight bounds for large system sizes (of order $N\approx 100$).

The same technique also allows us to deepen the study of temporal correlations generated by {\it finite-state automata} \cite{PazBook, Rabin1963}. These models appear in several problems in quantum foundations and quantum information theory such as classical simulations of quantum contextuality \cite{KleinmannNJP2011,Fagundes2017,BudroniNJP2019,Context_review}, quantum simulations of classical stochastic processes \cite{GarnerNJP2017,Elliott2018,Elliott2019}, purity certification \cite{SpeePRA2019}, dimension witnesses \cite{HoffmannNJP2018, SpeeNJP2020, SpeeNJP2020_a}, quantum advantages in the design of time-keeping devices \cite{ErkerPRX2017,Woods2022,tick_sequence_paper,Vieira_Budroni}, and classical simulation of quantum computation ~\cite{Zurel:2020PRL}; for more details, see a recent review on temporal correlations  \cite{Vitagliano2022}.
Specifically, we focus on the problem of optimal clocks, namely, how good a clock can we construct, given access to a classical automaton with $d$ internal states? Using our tools, we solve this problem for low values of $d$, by providing upper bounds on the maximum probability that the automaton outputs the `one-tick sequence' $000...01$ investigated in \cite{tick_sequence_paper,Vieira_Budroni}. These bounds match the best known lower bounds up to numerical precision.

In addition to finite sequences of operations, we are also often interested in letting a protocol or process run indefinitely, i.e., we are interested in characterizing its asymptotic behavior as $N \rightarrow \infty$. This is, e.g., important for systems that are left to evolve for a long time or in cases in which we aim to probe large systems, where this limit is a good approximation. 

To our knowledge, not much is known about time-ordered processes in the limit of infinitely many rounds (except for results on asymptotic rates in hypothesis testing and when taking specific limits~\cite{asymptotics1, asymptotics2}). In this paper we show that there is a fundamental reason for this: there are sequential models for which no algorithm can approximate their asymptotic behavior. This follows from undecidability results related to finite-state automata; specifically, a construction from Refs.~\cite{gimbert2010,undecidability}. Nevertheless, we develop a heuristic method for computing rigorous bounds on the asymptotic behavior. We further find that in the applications we consider, these bounds are close to the expected asymptotic behavior and to the lower bounds we can compute, thus substantiating the usefulness of our method. More specifically, we use the asymptotic method to bound the type-I error of the many-body entanglement-detection protocol mentioned above in the limit of infinitely many particles. The result is a bound that is, at most, at a distance of $4 \times 10^{-4}$ from the actual figure. Our asymptotic method also bounds the probability that a two-state automaton outputs the one-tick sequence in the limit of many time steps to $O(10^{-2})$.

Finally, in some circumstances it becomes necessary not to analyze but to find sequential models with a good performance. That is, given a number of parameters that we can control -- the policy -- which might affect the evolution of the system as well as the reward, we pose the problem of deciding which policy maximizes the total reward after $N$ rounds. We propose to tackle this problem via projected-gradient methods, and show how to cast the computation of the gradient of our upper bounds as a tractable convex-optimization problem. Using this approach, we let the computer discover a two-state, six-round preparation game to detect one-qubit magic states.

Our paper is structured as follows. In Sec.~\ref{sec:examples_intro}, we present three problems in quantum information theory that we will use to illustrate our methods throughout. In Sec.~\ref{sec:sequential}, we introduce the abstract notion of sequential models and show how to optimize them, before applying the optimization method in Sec.~\ref{sec:application_finite}. In Sec.~\ref{sec:asymptotics}, we prove that the asymptotic behavior of sequential models cannot be approximated in general. We also introduce a heuristic to tackle asymptotic problems, which we apply in Sec.~\ref{sec:asymptotics_examples}. In section \ref{sec:policies}, we pose the problem of policy optimization and show how to compute the gradient of the upper bounds derived in section \ref{sec:sequential}, over the parameters of the policy. We use gradient methods to solve a 
policy optimization problem in section \ref{sec:magic}. Finally, we present our conclusions in Sec.~\ref{sec:conclusion}.

\section{Problems with time-ordered operations}
\label{sec:examples_intro}

A variety of different problems involving quantum systems are of a sequential nature. In the following we introduce three settings that are made up of time-ordered operations and that all pose challenging open problems to the quantum information community.

\subsection{Temporal correlations} \label{sec:automaton_intro}

A {\it finite-state automaton} (FSA) is a mathematical structure that models an autonomous computational device with bounded memory. More formally \cite{PazBook, Rabin1963}, a $d$-state automaton is a device described, at every time, by an internal state $\sigma\in\Sigma$, with $|\Sigma|=d$. At regular intervals, rounds or time steps, the automaton updates its internal state and generates an outcome $b\in B$, depending on both its prior internal state $\sigma$ and the state $y\in Y$ of its input port. Such a double state-and-output transition is governed by the  transition matrix of the automaton $P(\sigma',b|\sigma, y)$, which indicates the probability that the automaton transitions to the internal state $\sigma'$ and outputs $b$, given that its input and prior states were, respectively, $y$ and $\sigma$.
Overall, the probability of outputs ${\bf b} = (b_1,\ldots,b_n)$ given inputs ${\bf y} = (y_1,\ldots,y_n)$ can be computed as
\begin{align}
p({\bf b}|{\bf y})= \! \! \! \! \! \! \sum_{\sigma_0,\sigma_1,\ldots,\sigma_n} \! \! \! \! \! \! &p(\sigma_0)P(\sigma_1, b_1|\sigma_0,y_1) \nonumber \\
& \cdot P(\sigma_2, b_2|\sigma_1,y_2)\ldots P(\sigma_n, b_n|\sigma_{n-1}, y_n)  \label{eq:FSA_prob}
\end{align} 
Expressions such as Eq.~\eqref{eq:FSA_prob}, i.e., generated by a classical FSA, appear in several 
problems related to classical simulations of temporal quantum correlations, such as, e.g.,  classical 
simulations of quantum contextuality \cite{KleinmannNJP2011,Fagundes2017,BudroniNJP2019}, quantum 
simulations of classical stochastic processes \cite{GarnerNJP2017,Elliott2018,Elliott2019}, classical 
simulation of quantum computation ~\cite{Zurel:2020PRL}, and quantum advantages in the design of 
time-keeping devices \cite{ErkerPRX2017,Woods2022,tick_sequence_paper,Vieira_Budroni}. 

We are interested in scenarios in which the performance of a $d$-state automaton after $n$ time steps is evaluated by yet another automaton, or, more generally, by a time-dependent process with bounded memory. We wish to bound said performance over all automata with $d$ states. 

This class of problems includes, e.g., computing the maximum probability that a $d$-state automaton generates the so-called `one-tick sequence' 
$00\ldots 01$ (the name comes from its interpretation as a clock signal ``1'' for a tick of the 
clock). This optimization problem is central in the discussion of the optimal classical and quantum models able to generate a time signal, with the consequent quantum advantages in the design of physical clocks. Unfortunately, there are no general optimization methods to upper bound the performance of a classical model and thus demonstrate a quantum advantage. The problem of optimal quantum clocks has been extensively investigated both from a theoretical (quantum information and quantum thermodynamics) perspective \cite{ErkerPRX2017,Woods2022,tick_sequence_paper,Vieira_Budroni, Woods2019, Yang2019, Yang2020, Schwarzhans2021, MeierPRL2023, Silva2023}, as well as from an experimental one \cite{ClockExp2021}.

Consider thus an FSA $A$, with states denoted by $\sigma\in\Sigma$, with $|\Sigma|=d$, inputs $y\in Y$ and outputs $b\in B$. We assume that the automaton is described by the (unknown) transition matrix $P(\sigma_{k+1},b_k|\sigma_k, y_k)$. 
Let $Q$ be a known time-dependent process, with an internal state denoted by $t$, and inputs (outputs) denoted by $z$ ($c$). The (known) transition matrix of this process at time steps $k=2,...,N$ is $Q_k(t_{k},c_k |t_{k-1}, z_{k-1})$. Each state $t_k$ of the process is associated with a reward $\alpha_k(t_k)$. At time step $1$, the state and output $t_1, c_{1}$ are generated by the distribution $Q_1(t_1, c_{1})$. 
Now, let us couple $Q$ with $A$, in the following way (see also Fig.~\ref{fig:coupling} for an illustration): after $Q$ generates $t_1, c_{1}$, we input $c_{1}$ in the automaton $A$, i.e., we let $y_1=c_{1}$. The automaton then transitions through $P$ to a state $\sigma_1$, outputting $b_1$. This output is further input into the process $Q$ (by letting $z_1=b_1$), which in turn generates $t_2,c_2$ through $Q_2$, and so on. At time step $N$ we consider the sum of all rewards,
\be
\sum_{k=1}^N\alpha_k(t_k).
\ee
\noindent Our goal is to maximize the expectation value of the total reward over all $d$-state automata $A$.
\begin{figure}
  \centering
  \includegraphics[width=9cm]{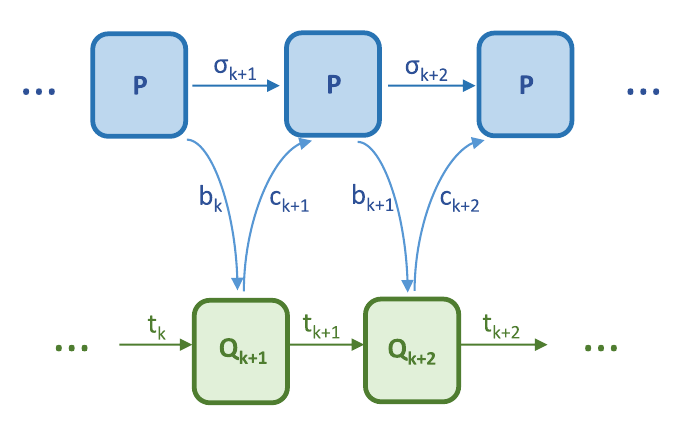}
  \caption{\textbf{An Automaton (blue) coupled to a process $Q$ (green).} The process may in general differ from round to round. }
    \label{fig:coupling}
\end{figure}
Note that by convexity, the best strategy is for the automaton to be initialized in a specific state $\sigma_0=\bar{\sigma}_0$ rather than a distribution thereof \footnote{It doesn't matter which of the initial states is chosen as these choices are equivalent up to relabeling.}. Hence, fixing the initial state, the optimal automaton is fully characterized in terms of the transition matrix $P$ that maximizes the expected reward.

This problem can be mathematically rephrased in a way that will prove useful later. Namely, the coupled systems $A$ and $Q$ can be regarded as a single dynamical system the internal state $s_k$ of which at time step $k$ corresponds to the probability distribution of the triple $(t_k, c_{k}, \sigma_k)$, i.e., $s_k:= p_k(t_k, c_{k}, \sigma_k)$. The evolution of this system from one time step to the next follows the equation of motion 
\begin{align}
p_{k+1}(t_{k+1}, c_{k+1}, \sigma_{k+1}) \! & =\! \! \! \! \! \! \! \! \! \! \sum_{c_{k}, t_k, \sigma_k,b_k} \! \! \! \! \! \!  \! \! p_k(t_k, c_{k}, \sigma_k)  P(\sigma_{k+1}, b_{k}|\sigma_k,c_{k})  \nonumber \\ & \qquad \qquad \quad \cdot Q_{k+1}(t_{k+1}, c_{k+1}|t_k,b_{k}) \nonumber \\
&=:f_k(s_k, \lambda). \label{eq:eom_automaton} 
\end{align}
\noindent Note that the evolution is given here by a polynomial $f_k$ of the state $s_k$ and an unknown evolution parameter $\lambda:=(P(\sigma',c|\sigma, z):z\in Z, c\in C,\sigma\in \Sigma)$. In this formulation of the problem, the reward $r_k$ at time step $k$ is given by
\be
r_k(s_k):=\sum_{t_k,c_{k},\sigma_k}p_k(t_k,c_{k},\sigma_k)\alpha(t_k). 
\ee
Note that the reward does not depend  on $\lambda$ here (except via the $p_k$): such a dependency could, however, be introduced by using different reward functions $r_k(s_k, \lambda)$.
In this picture, our task is to maximize the total reward over all possible evolution parameters $\lambda$.

\subsection{Entanglement certification of many-body systems}

Given an $N$-partite quantum system, we are often faced with the problem of determining whether its state is entangled. In principle, this problem can be solved by finding an appropriate entanglement witness and estimating its value on the state at hand. 
However, in addition to the problem that finding such a witness theoretically is NP-hard~\cite{Gharibian}, its estimation may involve conducting joint measurements on many subsystems of the $N$-partite state, a feat that may not be experimentally possible. Replacing such joint measurements by estimates based on local measurement statistics requires, in general, $\mbox{Exp}[N]$ state preparations, thus rendering the corresponding protocols infeasible for moderately sized $N$ (as is the case when the protocol demands full quantum state tomography). 

One way to avoid these problems, i.e.\ perform single-system measurements and reduce the number of state preparations, is to carry out an adaptive protocol~\cite{prep_games}. Such a scheme proceeds as follows. We sequentially measure the particles that constitute the $N$-particle ensemble, making future measurements depend on previous ones as well as on previous outcomes. Once we conduct the last measurement, we make a guess on whether the underlying state was entangled or not.

To choose how to measure each system and to make our final guess, we use a dynamical system $Q$. This process has internal states $t_k\in T$, and its transition matrix at time $k=1,...,N-1$ is of the form $Q_k(t_{k+1}, y_{k+1}|t_k,b_k)$, where $b_k\in B$ denotes the measurement outcome at time $k$ and $y_{k+1}$ labels the measurements to be conducted on the $(k+1)${th} particle. For each $y\in Y_k$, there exists a positive operator-valued measure (POVM) $\{M^k_{b|y}\}_b$. The final guess is generated through the function $Q_N(t_{N+1}|t_N,b_N)$, with $t_{N+1}\in\{\mbox{`entangled', `separable'}\}$. The initial state of the process and the measurement setting $y_1$ for the first particle are generated by the distribution $Q_0(t_1, y_1)$. This is illustrated in Fig.~\ref{fig:GHZsequential}.
\begin{figure}
  \centering
  \includegraphics[width=\columnwidth]{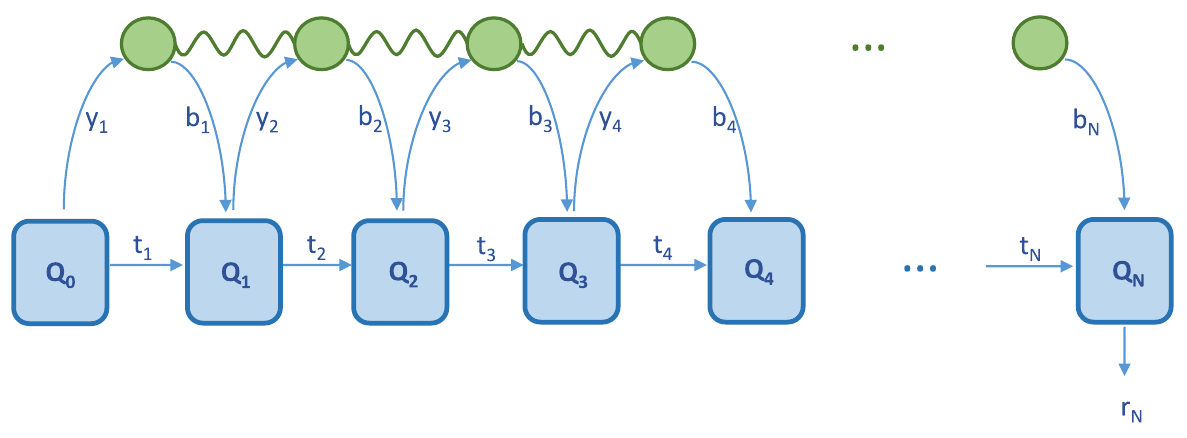}
  \caption{\textbf{The sequential model for entanglement detection.} The particles of an $N$-party state (green) are measured one by one. The $k$th measurement $y_k$ depends on the internal state of a finite-state automaton and the outcome $b_k$ affects the transition of the automaton according to $Q_{k}$. 
   }
  \label{fig:GHZsequential}
\end{figure}

If the target state $\rho$ the entanglement of which we wish to detect experimentally admits an efficient matrix product operator (MPO) decomposition \cite{MPO}, then one can efficiently compute the probability that the process outputs the result `separable', see Fig.~\ref{fig:contraction}. This probability is usually called a \emph{type-II} error, or a \emph{false negative}. 

\begin{figure}
  \centering
  \includegraphics[width=\columnwidth]{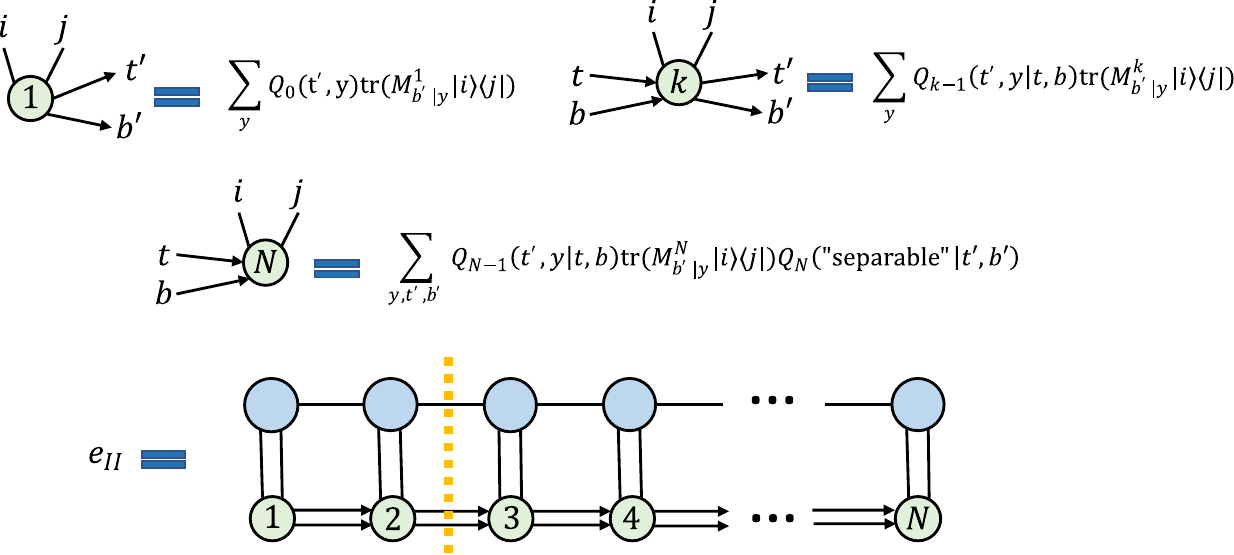}
  \caption{\textbf{Efficient computation of the type-II error.} The blue circles represent the tensors making up our MPO target state, with two vertical legs denoting the bra and ket indices of each particle and two horizontal legs of dimension $D$ --the bond dimension-- to account for the  correlations of target state. Defining the green tensors in the way indicated in the figure, their contraction with the blue tensors equals the type-II error of the corresponding entanglement-detection protocol. This contraction can be computed by multiplying matrices of dimension $|T||B|D$, as the dashed orange line indicates.
  }
  \label{fig:contraction}
\end{figure}

It remains to compute the \emph{type-I} error, or \emph{false positive} of the entanglement-detection protocol encoded in $Q$: that would correspond to the maximum probability that the process outputs `entangled' when the input is a fully separable quantum state. In computing this maximum, we can assume, by convexity, that the player has prepared a pure product quantum state. That is, at each time step $k$, we measure some pure quantum state $\rho_k$. Despite this simplification, computing the type-I error of the scheme requires maximizing a multilinear function with a size-$O(N)$ input. As $N$ grows, this becomes a more difficult problem.

Like the problem of computing the maximum performance over all $d$-state automata, maximizing the type-I error over the set of separable states can be phrased as an optimization over the sequential outputs of a deterministic dynamical system. In this case, the  internal state $s_k$ of the dynamical system at time step $k$ would correspond to the distribution $P_k$ of $t_k,y_k$. The  equation of motion of the system is
\begin{align}
P_{k+1}(t_{k+1},y_{k+1})= \! \! \! \sum_{t_k,y_k,b_k}P_k &(t_k,y_k) \tr(M^k_{b_k|y_k}\rho_k) \nonumber \\ & \cdot Q_k(t_{k+1},y_{k+1}|t_k, b_k),
\label{eq:eom_ghz_like}
\end{align}
\noindent where the quantum state $\rho_k$ can be interpreted as an uncontrolled external variable $h$ influencing the evolution of the system. At time step $N$, the system outputs the deterministic outcome
\begin{align}
    r_{N}(s_{N})=\! \! \! \! \!\sum_{t_N,y_N,b_N} \! \! \! \! \! P_N&(t_N,y_N)\tr(M^N_{b_N|y_N}\rho_N) \nonumber \\
    &\cdot Q_N(t_{N+1}|t_N, b_N)\delta_{t_{N+1}, \mbox{`entangled'}}.
\end{align}

\subsection{Independent identically distributed (IID) preparation strategies in  quantum preparation games}
\label{sec:iid_strategies}
The notion of preparation games aims to capture the structure of general (possibly adaptive) protocols, where quantum systems are probed sequentially \cite{prep_games}. More precisely, a \emph{quantum preparation game} is an $N$-round task involving a player and a referee. In each round, the player prepares a quantum state, which is then probed by the referee. In round $k$, before carrying out his measurement, the referee's current knowledge of the source used by the player is encoded in a variable $t_k\in T_k$, with $|T_k|<\infty$, called the \emph{game configuration}. This variable depends 
nontrivially on the past history of measurements and measurement results: in each round, it guides the referee in deciding which measurement to perform next and changes depending on its outcome. This double role of the game configuration can be encoded in the POVM used by the referee for that round. That is, assuming that the game configuration before the measurement is $t\in T_k$, the probability that the new game configuration is $t'\in T_{k+1}$ is given by $\tr(\rho_kM^k_{t'|t})$, where $\rho_k$ is the player's $k${th} state preparation; and $\{M^k_{t'|t}:t'\in T_{k+1}\}$, the POVM implemented by the referee when the game configuration is $t$. At the end of the preparation game, a score $g(t_{N+1})$, where $t_{N+1}$ denotes the final game configuration, is generated by the referee according to some scoring system $g(t)$ for $t\in T_{N+1}$.

In some circumstances -- e.g., in an entanglement detection protocol where the player tries to (honestly) prepare a specific entangled state -- it makes sense to consider a player that follows an independent identically distributed (IID) strategy, meaning that the player produces the same state $\rho$ in each round. 

Consider thus the problem of computing the maximum game score, for a fixed scoring system, achievable with preparation strategies of the form $\rho^{\otimes N}$, with $\rho\in C$, where $C$ is some set of states.
In Ref.~\cite{prep_games}, it has been shown how to compute the maximum score of a preparation game under IID strategies when the set $C$ of feasible preparations has finite cardinality. In this work, we consider the case in which $C$ is continuous.

This problem can be modeled through a dynamical system internal state $s_k$ of which at time $k$ corresponds to $P_k(t_k)$, the distribution of game configurations at the beginning of round $k$. The equation of motion of the system is
\begin{equation}
P_{k+1}(t')=\sum_{t\in T_k}P_k(t)\tr(\rho M^k_{t'|t}),
\end{equation}
\noindent where the unknown evolution parameters $\lambda:=\rho\in C$ correspond to the player's preparation. The dynamical system outputs a reward at time $N$, namely
\begin{equation}
r_{N}(s_{N},\lambda)=\sum_{t'\in T_{N+1}}g(t')\sum_{t\in T_k}P_N(t)\tr(\rho M^N_{t'|t}).
\end{equation}
\noindent Our goal is thus to maximize the  reward of the system over all possible values of the evolution parameter $\lambda$, i.e., over all $\rho\in C$. 

As an aside we note that, in some circumstances, it becomes necessary to compute the maximum average game score achievable by an \emph{adversarial player}, who has access to the current game configuration and can thus adjust their preparation in each round accordingly. Curiously enough, this problem is easier than optimizing over IID strategies, and, in fact, a general solution is presented in Ref.~\cite{prep_games}. Notwithstanding, we show below that the computation of the maximum game score of a preparation game under adaptive strategies can also be regarded as a particular class of the sequential problems considered in this paper.

\section{Sequential models and their optimization}
\label{sec:sequential}
The three problems described above are -- even though from a physical point of view rather different  -- structurally very similar. Indeed, they are all examples of the type of problem that we introduce more abstractly in the following. 
Consider a scenario in which the state of a dynamical system is fully described at time step $k$ by a variable $s_k\in S_k$. Between time steps $k$ and $k+1$, the system transitions from $s_k$ to another state $s_{k+1}$ depending on $s_k$, a number of uncontrolled variables $h_k\in H_k$ and a number of unknown evolution parameters $\lambda\in \Lambda$. The equation of motion that describes this transition is
\begin{equation}
    s_{k+1}=f_k(s_k, h_k, \lambda),
    \label{motion_eq}
\end{equation}
\noindent where $S_k$, $H_k \ \forall k $ and $\Lambda$ are sets of parameter values.  

Our goal is to optimize over this type of system, according to some figure of merit that is given by the problem at hand. We model this by assuming that, in each time step, the system emits a reward (or penalty) $r_k(s_k, h_k, \lambda)$. Note that this reward may be the zero function for certain $k$, so that this includes problems where only the final outcome matters as a special case. We shall refer to such a system from now on as a \emph{sequential model} (see 
Fig.~\ref{fig:model}). 
\begin{figure}[t]
  \centering
  \includegraphics[width=9cm]{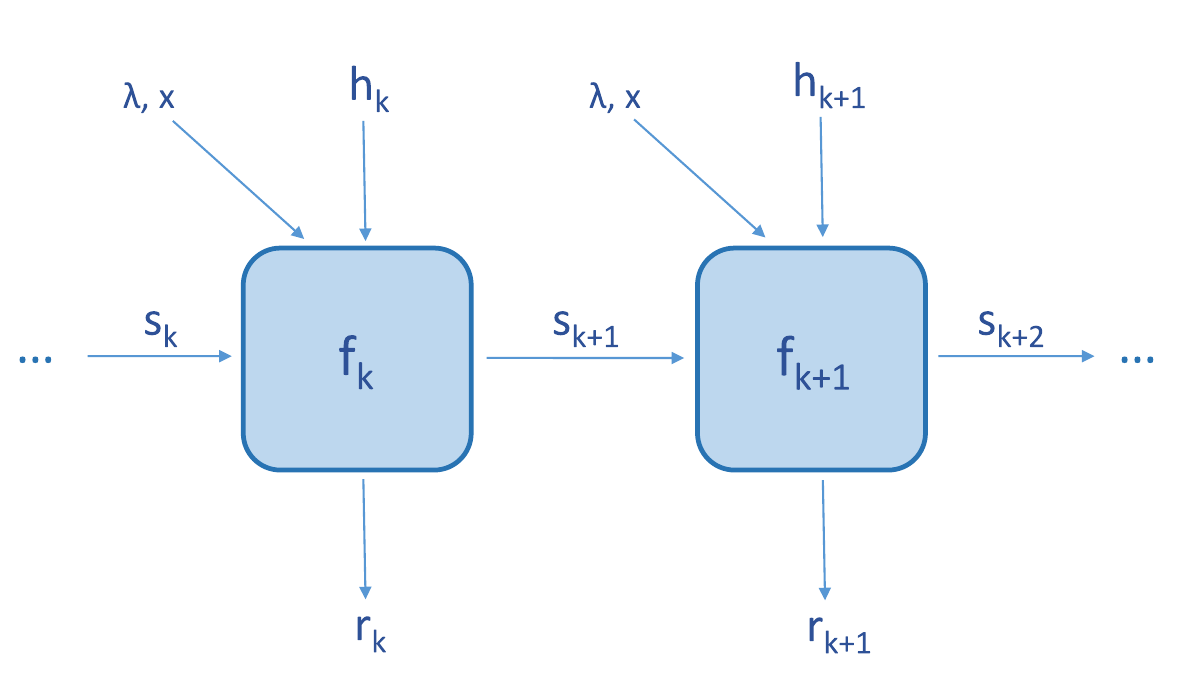}
  \caption{\textbf{A pictorial representation of a sequential model.} Note that the equation of motion itself may capture several interactions or be composed of several steps (e.g.\ the measurement of a quantum system followed by the update of a finite-state automaton). }
  \label{fig:model}
\end{figure}

 Our goal is to estimate the maximum total reward $\nu^\star$ over all possible values of $\lambda, h_1,...,h_N$. That is, we wish to solve the problem
\begin{align}
    \nu^\star \ := \ \max_{\lambda, h} \quad &\sum_{k=1}^N r_k(s_k,h_k,\lambda)\nonumber\\
    \mbox{such that } \quad &\lambda\in \Lambda, \ h \in \times_{k=1}^N H_k \label{main_problem}\\
    &s_{k+1}=f_k(s_k, h_k, \lambda) \ \ \forall  k. \nonumber
\end{align}
\noindent We generally assume that the initial state $s_1$ is known; otherwise, we can absorb it into the definition of $\lambda$ in the sense that we add one time step that generates the initial state. 
In the absence of $\lambda$, the above is mathematically equivalent to a deterministic Markov decision problem, if we regard $h$ as an action \cite{markov_book}.

\subsection{Reformulation of the optimization over sequential models}

To compute the optimal value $\nu^\star$, we consider an approach based on dynamical programming methods \cite{markov_book}. It consists in defining \emph{value functions} $V_k(s_k,\lambda)$, which represent the maximum reward achievable between time steps $k$ and $N$ over all possible values of $h_k,...,h_N$, starting from the state $s_k$, and under the assumption that the evolution parameters take the value $\lambda$. Approximating these functions with polynomials and invoking known characterizations of positive polynomials on a semialgebraic set, we manage to derive monotone, converging sequences of tractable upper bounds on $\nu^\star$. \new{We remark that similar ideas have been explored in the context of continuous-time optimal-control problems, which can be formulated as an infinite-dimensional linear program, for which semidefinite-program (SDP) relaxations based on polynomial optimization techniques exist; see Ref.~\cite{LasserreOCP2008} and references therein. 
}

In terms of the reward functions, value functions are given by
\begin{align}
    V_{N}(s_N,\lambda)&=\max_{h_N\in H_N}r_N(s_N,h_N,\lambda),\nonumber\\
    V_k(s_k,\lambda) &= \max_{h_k\in H_k}r_k(s_k,h_k,\lambda)+ V_{k+1}(f_k(s_k, h, \lambda),\lambda)
    \label{bellman}
\end{align}
\noindent and where we recover $\nu^\star$ in the final step as
\begin{equation}
    \nu^\star=\max_{\lambda\in\Lambda} V_1(s_1,\lambda).
    \label{nu_star}
\end{equation}

\begin{widetext}
\noindent The problem of finding $\nu^\star$ can thus be reduced to 
\begin{align}
    \min_{V_1,...,V_N,\nu} \ \ &\nu \nonumber\\
    \mbox{such that } \quad   &V_{N}(s_N,\lambda) \geq r_N(s_N,h,\lambda) \quad \forall h\in H_N, \ s_N\in S_N, \ \lambda\in\Lambda \nonumber\\
     &V_k(s_k,\lambda) \geq r_k(s_k,h,\lambda)+ V_{k+1}(f_k(s_k, h, \lambda),\lambda) \quad \forall h\in H_k, \ s_k\in S_k, \ \lambda\in\Lambda \nonumber\\
    &\nu \geq V_1(s_1,\lambda) \quad \forall \lambda\in\Lambda. \label{dual} 
\end{align}
\end{widetext}

\noindent Indeed, on one hand, any feasible point of the problem given in Eq.~\eqref{dual} provides an upper bound on $\nu^\star$. On the other hand, the $V_1,...,V_N,\nu^\star$, as defined by Eqs.\ \eqref{bellman}, \eqref{nu_star}, are also a feasible point of \eqref{dual}. Hence the solution of the problem given in Eq.~\eqref{dual} is $\nu^\star$.

Unfortunately, unless their domain is finite, there is no general method for optimization over arbitrary functions. What is feasible is to solve problem given in Eq.~\eqref{dual} under the assumption that $\{V_k(s_k,\lambda)\}_k$ belong to a class of functions ${\cal F}$ described by finitely many parameters. Since, in general, the optimal functions $V_k(s_k,\lambda)$ might not belong to ${\cal F}$, the result $\bar{\nu}$ of such an optimization is an upper bound on the actual solution $\nu^\star$ of the problem. The class ${\cal F}$ of polynomial functions is very handy, as it is closed under composition and dense in the set of continuous functions with respect to the uniform norm. This is the class we are working on in this paper. 

For the rest of the paper, we thus  assume that, for all $k \in \{1,2, \ldots, N \}$, $f_k(s_k,h_k,\lambda), r_k(s_k,h_k,\lambda)$ are polynomials on $s_k,h_k,\lambda$ and that the sets $S, H_k$ and $\Lambda$ are bounded sets in some metric space that can be described by a finite number of polynomial inequalities. Note that, under these assumptions, the value functions are continuous on $s_k$ and $\lambda$. 

Call, then, $\nu^n\geq \nu^\star$ the result of Eq.~(\ref{dual}) under the constraint that $\{V_k(s_k,\lambda)\}_k$ are polynomials of a given degree $n$. Since any continuous function defined on a bounded set can be arbitrarily well approximated by polynomials, it follows that $\lim_{n\to\infty} {\nu}^n=\nu^\star$.

There is the added difficulty that solving (\ref{dual}) entails enforcing positivity constraints of the form 
\begin{equation}
p(x)\geq 0,\forall x\in X,
\end{equation}
where $p(x)$ is a polynomial on the vector variable $x\in \R^m$ and $X$ is a bounded region of $\R^m$ defined by a finite number of polynomial inequalities (i.e., a basic semialgebraic set). Fortunately, there exist several complete (infinite) hierarchies of sufficient criteria to prove the positivity of a polynomial on a semialgebraic set \cite{krivine, stengle, schmuedgen, putinar}. In algebraic geometry, any such hierarchy of criteria is called a \emph{positivstellensatz}. 

Two prominent positivstellens\"atze are Schm\"udgen's \cite{schmuedgen} and Putinar's \cite{putinar}. Both hierarchies have the peculiarity that, for a fixed index $k$, the set of all polynomials satisfying the $k${th} criterion is representable through a  SDP. The application of Schm\"udgen's and Putinar's positivstellens\"atze for polynomial optimization thus leads to two complete hierarchies of SDP relaxations. The SDP relaxation based on Putinar's positivstellensatz is known as the Lasserre-Parrilo hierarchy \cite{lasserre, parrilo}. The reader can find a description of both hierarchies in App.~\ref{app:polynomials}. While the hierarchy based on Schm\"udgen's positivstellensatz is more computationally demanding than the Lasserre-Parrilo hierarchy, it converges faster.  In our numerical calculations below, we use a hybrid of the two. 

The use of the $k^{th}$ (sufficient) positivity test of a given positivstellensatz while optimizing over polynomial value functions of degree $n$ results in an upper bound $\nu^{n}_k$ on $\nu^n$, with $\lim_{k\to\infty}\nu^n_k=\nu^n$. For our purposes, the take-home message is that $\nu^n_k\geq \nu^\star$, for all $k,n$ and $\lim_{n\to\infty}\lim_{k\to\infty}\nu^n_k=\nu^\star$.

We finish this section by noting that there is nothing fundamental about the use of polynomials to approximate value functions. In fact, there exist other complete families of functions that one might utilize to solve the problem given in Eq.~\eqref{dual}. A promising such class is the set of \emph{signomials}, or linear combinations of exponentials, for which there also exists a number of positivsellens\"atze in the literature \cite{signom_pos1,signom_pos2}. In this case, the corresponding sets of positive polynomials are not SDP representable  but are nonetheless tractable convex sets. As such, they are suitable for optimization.

\subsection{Quantum preparation games as sequential models}
\label{sec:prep_games}

Sequential models capture the structure of various types of time-ordered processes, including the three introduced in Sec.~\ref{sec:examples_intro}. Here, we argue that sequential models are actually rather general: optimizations over adversarial strategies in quantum preparation games can be modeled as sequential problems, too. In fact, their resolution by means of value functions allows one to rederive previous results on the topic~\cite{prep_games}.

Let ${\cal S}$ be a set of quantum states and consider a preparation game where the player is allowed to prepare, at each round $k$, any state $\rho(t_k)\in {\cal S}$ depending on the current game configuration $t_k$. This preparation game can be interpreted as sequential model with $\Lambda=\emptyset$, $s_k=p_k(t_k)$, $h_k=(\rho^k(t):t\in T_k)$ and equation of motion
\begin{equation}
p_{k+1}(t')=\sum_{t\in T_k}p_k(t)\tr(\rho^k(t)M^k_{t'|t}).
\end{equation}

We consider the problem of maximizing the average game score (or reward)
\begin{equation}
r_{N+1}(p_{N+1})=\sum_{t\in T_{N+1}}p_{N+1}(t)g(t),
\end{equation}
\noindent where $g$ is the score function of the game  \cite{prep_games}.

In principle, we could formulate this problem as in Eq.~(\ref{dual}). Note that $V_{N+1}=r_{N+1}$ is linear in $p_{N+1}$. This motivates us to consider an ansatz of linear value functions $V_k(p_{k})$ for this problem. 
That is, we aim to solve the problem given in Eq.~(\ref{dual}) under the assumption that
\begin{equation}
V_k(p_{k})=\sum_{t\in T_k}\mu_k(t)p_k(t),
\end{equation}
\noindent for some vector $(\mu_k(t):t\in T_k)$. The condition ${V_{k}(p_k)\geq V_{k+1}(p_{k+1})}$ then translates to
\begin{align}
\sum_{t\in T_k} \! \! \mu_k(t)p_k(t) \! &\geq  \! \max_{ \{\rho^k(t) \}_t} \! \sum_{t'\in T_{k+1}} \! \! \! \! \mu_{k+1}(t')  \! \! \sum_{t\in T_k}  \! \! p_k(t)\tr(\rho^k(t)M^k_{t'|t})\nonumber\\
&=\sum_{t\in T_k}p_k(t)\max_{\rho^k(t)}\tr \! \! \left(\! \!\rho^k(t)\! \!\sum_{t'\in T_{k+1}}\! \!\! \!V_{k+1}(t')M^k_{t'|t}\! \! \right)\! .
\end{align}
Setting $p_k(t)=\delta_{t,j}$, for $j\in T_k$, we arrive at the recursion relations:
\begin{align}
&\mu_{N+1}(t)=g(t),\nonumber\\
&\mu_k(t)=\max_{ \rho^k(t)}\tr\left(\rho^k(t)\sum_{t'\in T_{k+1}}\mu_{k+1}(t')M^k_{t'|t}\right),
\label{recur_prep}
\end{align}
which allow us to compute $V_1(t_0)$. This quantity is, in principle, an upper bound on the actual solution of the problem in Eq.~(\ref{dual}), because we have enforced the extra constraint that the value functions are linear. The upper bound is, however, tight, as can be seen by computing the average score of the preparation game  under the preparation strategy given by the maximizers $\bar{\rho}^k(t)$ of the problem in Eq.~(\ref{recur_prep}). In fact, all the above is but a complicated way to arrive at the recursion relations provided in Ref.~\cite{prep_games} to compute the maximum score of a preparation game.

\section{Application to optimizing time-ordered processes}
\label{sec:application_finite}

All the sequential problems considered in this work could, in principle, be solved (or at least approximated) through the straightforward application of the Lasserre-Parrilo hierarchy~\cite{lasserre, parrilo}. However, the degree of the polynomials in such problems grows linearly with the number of steps $N$, thus making the corresponding SDP intractable for more than a few iterations.  Our reformulation of the sequential optimization problem as in (\ref{dual}), on the other hand, allows us to circumvent this issue and optimize sequential models for much higher values of $N$.

\subsection{Optimizations over finite-state automata: Probability bounds} \label{sec:automaton}

In this section, we investigate the maximum probability with which a classical FSA can generate a given output sequence. This problem has been extensively investigated for input-output sequences  \cite{BudroniNJP2019}, as well as sequences with only inputs \cite{tick_sequence_paper,Vieira_Budroni}. In particular, we consider the so-called one-tick sequence, $00\ldots01$, introduced in Ref.~\cite{tick_sequence_paper} in connection with the problem of optimal clocks. For a sequence of total length $L$, we use the compact notation $0^{L-1}1$. Here, we use the general techniques developed in 
Sec.~\ref{sec:sequential} to tackle this problem. Note that this is an instance of the type of problem introduced in Sec.~\ref{sec:automaton_intro}. 

Consider a $d$-state automaton without inputs and with binary outputs, i.e., $Y=\emptyset$, $B=\{0,1\}$. We wish to compute the maximum probability that the automaton outputs the one-tick sequence $0^{L-1}1$, denoted $P_{\rm max}^{\rm d}(L)$. For this purpose, let us construct a simple process $Q$ that, coupled to the automaton, produces an expected reward that equals $P(0^{L-1}1)$. $Q$ has two internal states, i.e., $t_k\in\{0,1\}$ and no outputs ($C=\emptyset$). Its initial state is $t_1=1$, and its transition matrices are as follows: 
\begin{align}   Q_k(t_{k+1}|t_k,b_k)&=\delta_{t_{k+1},t_k(1- b_k)},\mbox{ for } k=1,...,L-1,\nonumber\\
Q_{L}(t_{L+1}|t_{L},b_{L})&=\delta_{t_{L+1},t_Lb_L}.
\end{align}
\noindent For time steps $k=1,...,L-1$, the process $Q$ therefore keeps being in state $1$ as long as the automaton outputs $0$'s; if the automaton outputs any $1$, the state of the process changes to $0$ and stays there until the end of the game. At time step $L$, process $Q$ receives the output $b_L$. If $b_L=0$ or $t_{L}=0$, then $t_{L+1}=0$; otherwise, $t_{L+1}=1$. The probability that the automaton produces the sequence $0^{L-1}1$ thus corresponds to the probability that $t_{L+1}=1$. This corresponds to a reward function 
\begin{align}
    \alpha_k(t_k)&=0, \mbox{ for }k=1,...,L-1, \nonumber\\
    \alpha_{L+1}(t_{L+1})&=\delta_{t_{L+1},1}.
\end{align}

Using the techniques developed in Sec.~\ref{sec:sequential}, we map this problem to a polynomial optimization problem. Due to the choice of reward (together with the process $Q$), we see that the only terms from the equation of motion given in Eq.~\eqref{eq:eom_automaton} that contribute to the final reward are the ones with the correct $b_k$. This allows a further simplification of the model. 

First, let us redefine the internal states of the sequential model to be the vector
\begin{equation}
s_k=(p_k(t_k=1,\sigma_{k+1}))_{\sigma},
\end{equation}
\noindent described by $d$ parameters $s_k^\sigma$, with constraints $s_k^\sigma \geq 0$ for $\sigma=0,\ldots, d-1$, and $1-\sum_\sigma s_k^\sigma\geq 0$. 
Next, we redefine the unknown evolution parameter $\lambda$ to have dimension $d^2$, through: $\lambda:=(P(\sigma',b=0|\sigma):\sigma,\sigma'=0,1)$, with constraints
\begin{align}
P(\sigma',b=0|\sigma) &\geq 0, \mbox{ for all } \sigma,\sigma'=0,\ldots,d-1\nonumber\\
1-\sum_{\sigma'}P(\sigma',b=0|\sigma) &\geq 0, \mbox{ for }\sigma=0,\ldots, d-1.
\end{align}

For $k=1,...,L-1$, the equation of motion of this sequential model is

\begin{equation}
p_{k+1}(t_k=1,\sigma_{k+1})=\sum_{\sigma_k}p_{k}(t_{k-1}=1,\sigma_{k})P(\sigma_{k+1},0|\sigma_k),
\end{equation}
while the reward function reads
\begin{align}
    r_k &=0, \mbox{ for }k=1,...,L-1;\nonumber\\
    r_{L}(s_{L},\lambda) &=\sum_{\sigma}p_L(t_L=1,\sigma)\left(1-\sum_{\sigma'}P(\sigma',b=1|\sigma)\right).
\end{align}

Now, we apply an SDP relaxation \`a la Lasserre \cite{lasserre} to obtain an upper bound on the solution of the sequential problem for the cases of $d=2$ and $d=3$. In the latter case, no nontrivial analytical upper bound has previously been known \cite{tick_sequence_paper, Vieira_Budroni}. For details regarding the exact implementation we refer to App.~\ref{app:implementation1}.
Figures \ref{fig:d2} and \ref{fig:d3} display the upper bounds computed with this method. We find that, in every case, the obtained upper bound matches the output probability of the automaton proposed in Refs.~\cite{tick_sequence_paper, Vieira_Budroni}, up to numerical precision. This implies that the upper bounds computed are all tight. {The optimal models for these
sequences, except in the case $L = d+1$, correspond to the so-called cyclic model~\cite{tick_sequence_paper}. The model consists of deterministic transitions from one state to the other, and the
probability of emitting 1 is nonzero only in the last state. As a consequence, the bound depends only on
$\lfloor L/d\rfloor$, corresponding to the $2$- and $3$-periodic plateaus observed in Figs.~\ref{fig:d2} and \ref{fig:d3}.}

In order to compare the performance of our algorithm with those of the previously known method, namely, the Lasserre-Parrilo hierarchy ~\cite{lasserre, parrilo}, we also perform a numerical optimization of our problem with the latter method. While our method has been able to reach $L=50$ for $d=2$ and $L=10$ for $d=3$, the Lasserre-Parrilo hierarchy could not go beyond the case $L=7$ for $d=2$. This boost in performance was to be expected as  the degree of the polynomials in the Lasserre-Parrilo hierarchy grows linearly with the number of steps $L$, whereas in our method it remains constant. For more details, see Apps.~\ref{app:polynomials} and \ref{app:implementation0}.

\begin{figure}
  \centering
  \includegraphics[width=\columnwidth]{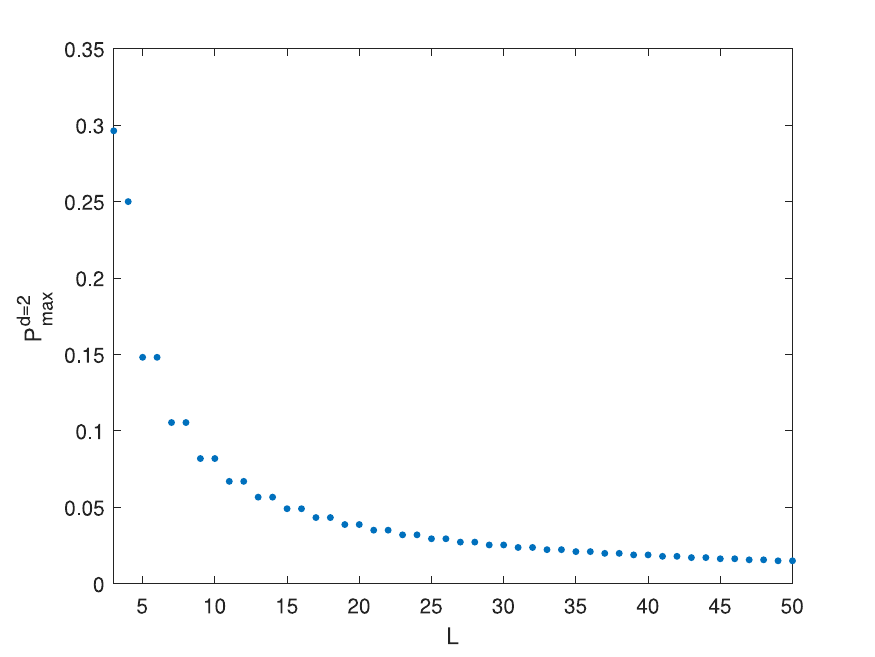}
  \caption{\textbf{Optimization over two-state automata.} The upper bounds (red) follow a pattern, where from $L=5$ on pairs of consecutive values coincide. Our bounds furthermore match the explicit automata found in Ref.~\cite{tick_sequence_paper} up to numerical precision, thus providing a tight bound and proving the optimality of the models found in Ref.~\cite{tick_sequence_paper, Vieira_Budroni}. The optimization has been performed with MOSEK \cite{mosek} and CVXPY \cite{cvxpy}. Each solution provided by the solver has been tested to verify that linear and positivity constraints are satisfied up to numerical precision.}
  \label{fig:d2}
\end{figure}

\begin{figure}
  \centering
  \includegraphics[width=\columnwidth]{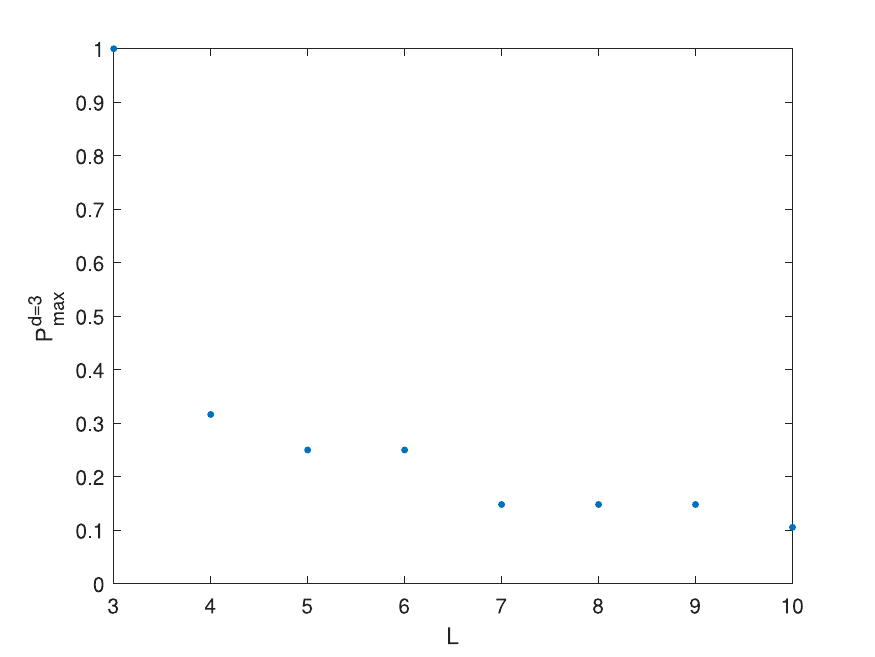}
  \caption{\textbf{Optimization over three-state automata.} Our method allows us also to optimize over three-state automata. Again, we recover up to numerical precision the bound computed with the explicit model found in Refs.~\cite{tick_sequence_paper, Vieira_Budroni}, proving that our bound is tight and that the model previously found was optimal. The optimization has been performed with SCS~\cite{scs} and CVXPY \cite{cvxpy}. Each solution provided by the solver has been tested to verify that linear and positivity constraints are satisfied up to numerical precision.}
  \label{fig:d3}
\end{figure}

In addition to the numerical values for $P^{\rm d}_{\rm max}$, we can also extract the value functions from our optimizations and use those to prove our bounds analytically. After deriving those value functions, we can further use them to find optimal automata for the problem at hand. 

For extracting value functions, let us note first that from the numerical solutions the coefficients of the polynomials $V_k$ can be directly extracted, as these are  optimization variables of the problem. However, the value functions may not be unique and extracting polynomials with suitable (rational or integer) coefficients from the numerics is more challenging. Ways to further simplify these value functions are elaborated in App.~\ref{app:value_fct}

The method for optimizing sequential models employed here allows us to derive upper bounds in terms of value functions, without treating the actual variables of the model -- in this case the $\{p_k\}_k$ and $P$ -- explicitly \footnote{Recall that the variables of the polynomial optimization are the coefficients of the $V_k$ rather than the variables of the sequential model $s_k$.}. \emph{A priori}, the solution hence does not tell us what the values of the optimal dynamical systems achieving the bounds may be, even if the bounds are tight. However, this can be remedied by first extracting the value functions from the optimization and then running a second optimization, this time fixing the value functions and optimizing for the variables of the sequential model, here $\{p_k\}_k$ and $P$. More precisely, if we found an optimal upper bound $\nu^\star$, then all value function inequalities have to be tight and there is an optimal dynamical system that achieves the bound. Now if we can find a system that satisfies all inequalities with equality, then this will automatically be optimal. 
To find such a model, we can thus consider the optimization problem 
\begin{align}
    \max_{s_k, \lambda, h_k} &V_1(s_1,\lambda) \nonumber\\
    \mbox{such that }    &V_{N}(s_N,\lambda) = r_N(s_N,h_N,\lambda) \nonumber\\
     &V_k(s_k,\lambda) = r_k(s_k,h_k,\lambda) \! + \! V_{k+1}(f_k(s_k, h_k, \lambda),\lambda)   \nonumber \\
     &h_k\in H_k \ \forall k, \ s_k\in S_k \ \forall k, \  \lambda\in\Lambda 
\end{align}
This is again a polynomial optimization problem that can be solved with an SDP relaxation. In the following examples we use the SOLVEMOMENT functionality of YALMIP~\cite{yalmip} at level $4$ and the solver MOSEK~\cite{mosek} to solve this.\footnote{Note that SOLVEMOMENT sometimes allows optimal solutions to the problem to be directly extracted, as is the case in the examples below. This is not surprising, as in cases where the Lasserre hierarchy has converged at a given finite level, the optimal values of the polynomial variables can be found in the moment matrix.} 
We find that in these cases, the problem is feasible and we extract optimal automata.

\begin{itemize}[leftmargin=0.25cm]
\item In the case $d=2$ and $L=3$ above, we obtain 
\begin{align}
V_1(s_1, \lambda) =(P(0,0|0) q_1(0) + P(0,0|1) q_1(1)) P(b=1|0) \nonumber\\
+(P(1,0|0)q_1(0) + P(1,0|1) q_1(1)) P(b=1|1),
\end{align}
where $q_k(\sigma):=p_k(1,\sigma)$. Note that in this case this is the natural optimization problem that one would also formulate directly to solve the problem at hand, however, for higher $L$ the decomposition into value functions differs. 
Now, using that the final reward is 
\begin{equation}
r_2=P(b=1|0)q_2(0)+ P(b=1|1)q_2(1),
\end{equation}
and the state-transition rules that translate to 
\begin{align}
    q_2(0)&= P(0,0|0) q_1(0)+P(0,0|1) q_1(1), \\
    q_2(1)&= P(1,0|0) q_1(0) + P(1,0|1) q_1(1),
\end{align}
we immediately see that 
$V_1(s_1,\lambda)-r_2(s_2(s_1),\lambda) = 0$. Furthermore, we can derive upper bounds on 
$V_1(s_1(s_0),\lambda)$, where we have from the state-transition rules $q_1(0)=P(0,0|0)$ and $q_1(1)=P(1,0|0)$ (as the starting state is  $q_0(1)=1$, $q_0(2)=0$) that
\begin{align}
    V_1(s_1(s_0),\lambda)\! = \! (P(1,0|1)^2 \! \!+ \! \!P(2,0|1) P(1,0|2)) P(b=1|1) \nonumber \\ +(P(1,0|1) P(2,0|1) + P(2,0|1) P(2,0|2)) P(b=1|2).
\end{align}
This gives a polynomial upper bound on the solution that indeed is upper bounded by $0.29630$, as a numerical maximization of the value function over the $P$, $q_k$ confirms. The optimization also shows that this maximum is achieved by an automaton with $P(0,0|0)=P(1,0|1)=\frac{1}{3}$, $P(1,0|0)=\frac{2}{3}$, $P(0,0|1)=0$, thus confirming this value as the optimal tight upper bound.\footnote{Deriving this upper bound analytically, is straightforward if we require in addition that $P(b=1|0)=0$, which we can impose without changing our upper bounds. In this case the polynomials simplifies in a way that inspecting the gradient immediately leads to these bounds.} This also coincides with the optimal automaton proposed in~\cite{tick_sequence_paper}.

\item For $d=2$ and $L=4$ we obtain the two value functions\footnote{Notice that we have changed the notation for value functions from $\{V_k\}_k$ to $\{W_k\}_k$ here, to make the distinction from the $L=3$ case clear.}
\begin{align}
    W_2(s_2,\lambda) &= V_1(s_2,\lambda) \\
    W_1(s_1,\lambda) &= 0.25 (1+q_1(0)(q_1(0)-P(0,0|0)) \nonumber \\& \qquad \qquad +q_1(1)(q_1(1)-P(1,0|0)))
\end{align}
Now, we can see that 0.25 is an upper bound for $W_1(s_1,\lambda)$, as 
the state-update rules and $q_0(0)=1$ and $q_0(1)=0$ in this case imply that $q_1(0)=P(0,0|0)$ and $q_1(1)=P(1,0|0)$.
As a last relation~\footnote{That $W_2(s_2, \lambda)$ upper bounds the reward follows like in the $L=3$ case above.} we have to check that 
\begin{equation}
    W_1(s_1,\lambda)-W_2(s_2(s_1),\lambda) \geq 0.
\end{equation}
A numerical optimization confirms this polynomial inequality. Requiring that $W_1(s_1,\lambda)=W_2(s_2(s_1),\lambda)$ while maximizing $W_1(s_1,\lambda)$ under the constraints on $s_1$ and $s_2$ gives an optimal solution for the automaton of  $P(1,0|0)=1$, $P(0,0|0)=0$ and $P(0,0|1)=\frac{1}{2}$, $P(1,0|1)=0$.\footnote{Analytically, checking this is again more straightforward when assuming additionally that $P(b=1|0)=0$.} This again coincides with the automaton from~\cite{tick_sequence_paper} for $L=4$.
\end{itemize}

Note that there are usually many  different value functions that could solve the problem. Extracting value functions with few terms and integer or fractional coefficients from the optimization problem, however, requires some tuning (see App.~\ref{app:value_fct}). The process becomes more cumbersome as we increase the number of variables. We thus do not provide value functions for larger $L$ here. We remark that extracting valid value functions from the optimization is, nevertheless, always possible.

\subsection{Entanglement detection in many-body systems: the automaton-guided GHZ game} \label{sec:entanglement}

To illustrate the performance of our SDP relaxations to solve entanglement certification problems in many-body systems, {we focus our attention on the $N$-qubit Greenberger-Horne-Zeilinger (GHZ) state
\begin{equation}
   \ket{GHZ} = \frac{1}{\sqrt{2}}\left(\ket{0}^{\otimes N}+\ket{1}^{\otimes N}\right)
\end{equation}
of size up to $N=35$. This is a timely application, as GHZ states of sizes up to $20$ qubits can already be prepared~\cite{GHZ18, GHZ20cat, GHZ20catRydberg}. 
The main goal in devising such protocols is to obtain a procedure that requires as few state preparations as possible, for reliably certifying entanglement. In this section we propose such a one-shot protocol and demonstrate that following this procedure indeed needs fewer repetitions $n$ than a family of protocols based on shadow tomography~\cite{kueng_shadow}. }

The main feature of our method is that we work in a hypothesis-testing setting. This means that we are also able to prove bounds on the performance of the protocol for the worst-case separable state (corresponding to the type-I error), which is difficult to compute and generally not provided when estimating fidelities. Our method has the additional advantage of relying only on single-qubit quantum operations (as compared e.g.\ to the 18-qubit example from Ref.~\cite{GHZ18} that uses two-qubit gates not just in the preparation but also in the disentangling phase) and  it can also deal with protocols that are adaptive (see Sec.~\ref{sec:entanglement2}), (as compared to determining the expectation of an operator, e.g.\ the parity~\cite{GHZ20cat, GHZ20catRydberg}, by fixed measurements, or compared to evaluating Bell inequalities~\cite{review_detection}). This latter aspect is what generally allows us to reduce the number of repetitions of our protocols.

It is worth remarking that, in \emph{any} one-shot entanglement-detection protocol that aims to detect the pure GHZ state, the type-I and type-II errors $e_I, e_{II}$ will satisfy the relation
\begin{equation}
e_I+e_{II}\geq\frac{1}{2}.
\label{limit_GHZ}
\end{equation}
To see why, call $M$ the $N$-qubit effective POVM element (depending on the protocol, the corresponding POVM will be global, LOCC or one-way LOCC) associated with declaring the system `entangled' and let $\sigma$ be the separable state
\begin{equation}\label{state}
\sigma=\frac{1}{2}(\proj{0}^{\otimes N}+\proj{1}^{\otimes N}).
\end{equation}
Then we have that 
\begin{align} 
1-e_{II}-e_I(\sigma) &=\tr\{M(\proj{GHZ}-\sigma)\} \nonumber\\
&\leq\frac{1}{2}\left\|\proj{GHZ}-\sigma\right\|_1 \nonumber\\
&=\frac{1}{2}, \label{bound}
\end{align}
where $e_{I}(\sigma)$ is the probability of declaring $\sigma$ ``entangled''. Since $e_I\geq e_I(\sigma)$, we arrive at Eq.~\eqref{limit_GHZ}. 

The protocol that we propose in the following saturates the relation in Eq.~\eqref{limit_GHZ} in the limit $N\to\infty$. In this sense, despite solely relying on one-qubit measurements, our protocol is asymptotically optimal.

Let us, then, describe our entanglement-detection protocol. The protocol is inspired by the GHZ game, previously considered in Refs.~\cite{Mermin1990, Boyer2004}, where here the role of the memory is taken by a classical FSA. We thus call this protocol the \emph{automaton-guided GHZ game}. 
Specifically, picture a protocol for many-body entanglement verification where, for each of the first $N-1$ particles, one of the following two measurements is performed with probability $\frac{1}{2}$:
\begin{align*}
 M_1 &=\{ M_{1|1}=\ket{+}\bra{+}, \ M_{2|1}=\ket{-}\bra{-} \} \ , \\M_2 &=\{ M_{1|2}=\ket{+i}\bra{+i}, \ M_{2|2}=\ket{-i}\bra{-i} \}, 
\end{align*}
where $\ket{\pm}=\frac{1}{\sqrt{2}}(\ket{0}\pm \ket{1})$ and $\ket{\pm i}=\frac{1}{\sqrt{2}}(\ket{0}\pm i \ket{1})$. This process is used to update the state of a four-state automaton that is initialized in state $t_1=1$ as displayed in Fig.~\ref{fig:GHZautomaton}.
\begin{figure}[h]
  \centering  \includegraphics[width=\columnwidth]{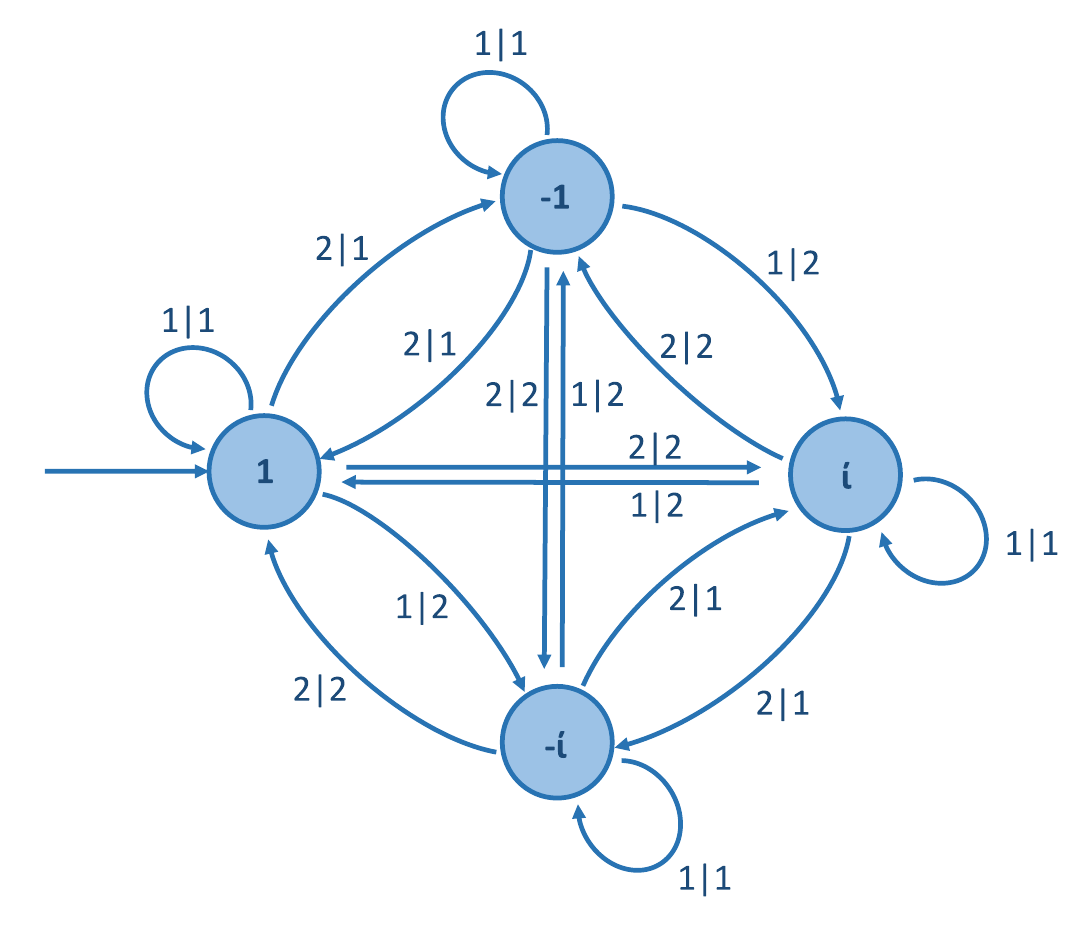}
  \caption{\textbf{The memory update of the four-state automaton in the GHZ game.} The transitions between states depend deterministically on $b$ and $y$, which are used to label the edges as $b|y$. In each round of the associated sequential model, this updates the automaton state $t \in \{1,-1,i,-i\}$. }
  \label{fig:GHZautomaton}
\end{figure}
For the final particle, we choose measurement $1$ for states $t_{N}=\pm 1$ of the automaton and measurement $2$ for $t_{N}=\pm i$. The verification of an entangled state is considered successful if the outcome, $b_{N}$, of this additional measurement is $b_{N}=1$ for $t_{N} \in \{1,i\}$ or $b_{N}=2$ for $t_{N} \in \{-1,-i\}$, 
in which case we update $t_{N+1}$ to ``entangled''; otherwise ``separable''.

This protocol succeeds with probability 1 for a GHZ state of any dimension. Indeed, think of an $k$-partite GHZ-like state of the form
\begin{equation}
\frac{1}{\sqrt{2}}\left(\ket{0}^{\otimes k}+\alpha\ket{1}^{\otimes k}\right).
\label{GHZ-like}
\end{equation}
\noindent where $\alpha\in\{1,-1,i,-i\}$. Suppose that we measure one of the qubit states in the basis $\{\ket{+},\ket{-}\}$ ($\{\ket{i},\ket{-i}\}$), and obtain the result $\ket{\pm}$ ($\ket{\pm i}$). Then, the final postselected state is an $(k-1)$-partite state of the form given in Eq.~\eqref{GHZ-like}, but this time with phase $\pm\alpha$ ($\mp i\alpha$).

Due to its transition matrix, the internal state of the automaton in Fig.~\ref{fig:GHZautomaton} keeps track of the phase of the GHZ-like state, as its qubits get sequentially measured, i.e., $t=\alpha$. 
The reward function simply verifies that the state of the last qubit  corresponds to $\frac{1}{\sqrt{2}}(\ket{0}+\alpha\ket{1})$.

The procedure introduced in Sec.~\ref{sec:sequential} allows us to derive upper bounds on the  maximum probability of success that is achievable with separable states, i.e.\ the worst-case type-I error $e_I$. 
We display our bounds for different system sizes $n$ in Fig.~\ref{fig:GHZbounds}.
\begin{figure}
  \centering \includegraphics[trim=3.6cm 9.6cm 3.6cm 9.6cm, width=\columnwidth]{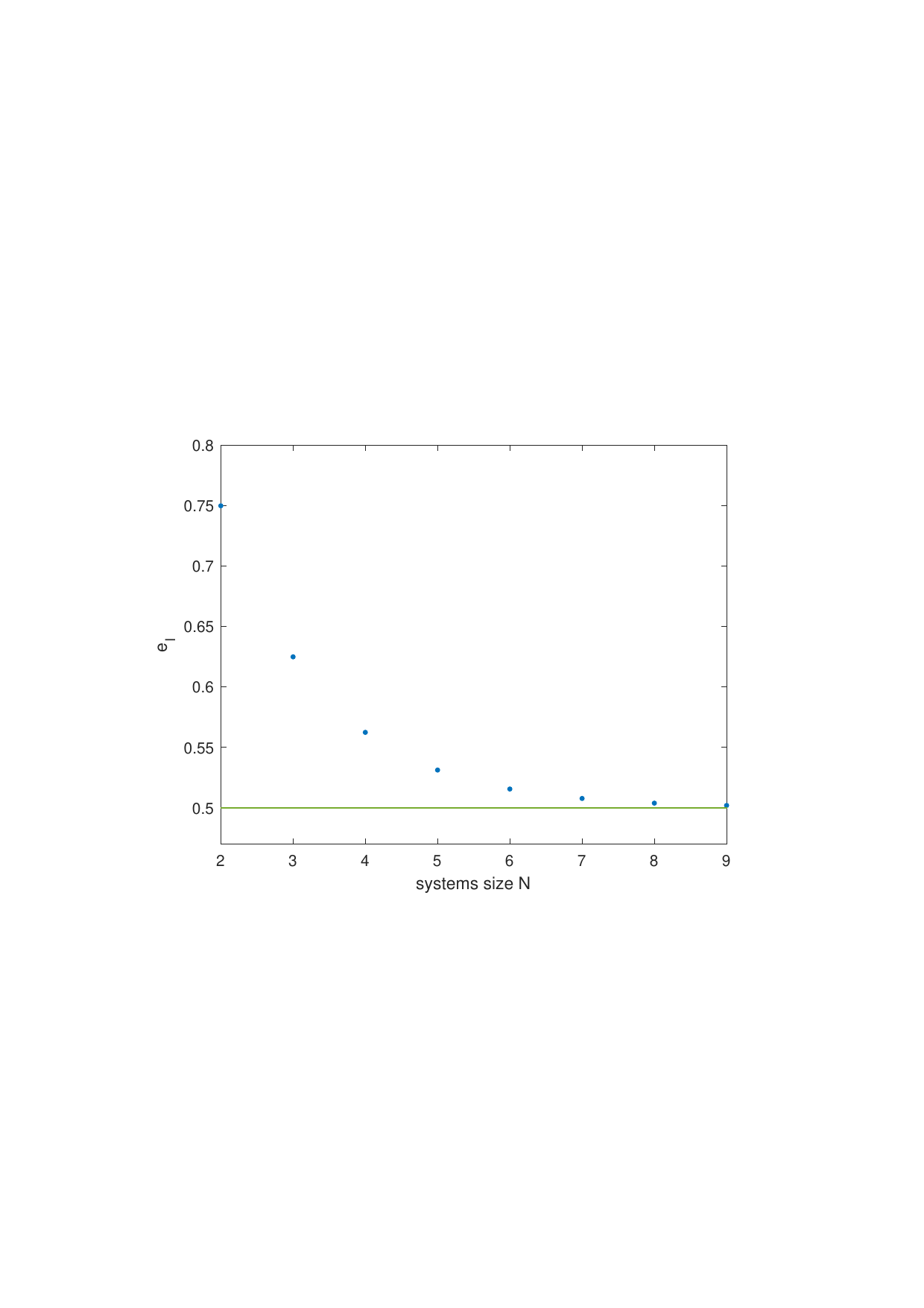}
  \caption{\textbf{The upper bounds for the winning probability of the best separable states in the GHZ game.} We observe that the winning probability (blue) drops quickly toward 0.5 (green line). The bounds are displayed up to $N=9$ for aesthetic reasons. In fact, we have checked that we can derive bounds up to $N=35$ with this method. With our asymptotic hierarchy (see Sec.~\ref{sec:asymptotics_ghz}), we further confirm an upper bound on the asymptotic value of $0.5004$ as $N \rightarrow \infty$.}
  \label{fig:GHZbounds}
\end{figure}
Note that the size of the optimization problem allows us to reach round numbers as high as $N=35$. {As the reader can appreciate, $e_I$ seems to converge to $\frac{1}{2}$ in the limit $N\to\infty$, thus saturating the inequality in Eq.~\eqref{limit_GHZ}.} Details on the implementation (including precision issues and how to resolve them in this case) are presented in App.~\ref{app:implementation2}.

We further compare the detection performance of the automaton-guided GHZ game with a family of entanglement-detection protocols based on shadow tomography~\cite{kueng_shadow}, in the following called \emph{shadow protocols}. 

There exist different protocols for shadow tomography, depending on the type of measurements conducted on the qubits. Since for the automaton-guided GHZ game we only allow the measurement of each qubit individually, we opt for shadow protocols involving random measurements in the three Pauli bases. 

Our family of shadow protocols for entanglement detection thus works as follows: 
\begin{enumerate}
\item A random measurement in one of the Pauli bases is carried out on each qubit and both the result and the basis are recorded. This procedure is repeated for $n$ experimental rounds (or state preparations). 
\item The resulting classical shadow is then used to produce an estimator $f$ for the underlying  fidelity of the quantum state with respect to the GHZ state.
\item If $f$ is greater than a threshold $0\leq \theta\leq 1$, then the protocol outputs the result `entangled'; otherwise, it outputs ``separable''.
\end{enumerate}
There is thus one shadow protocol for each value of $\theta$. 

The computation of the type-II error $e_{II}(\theta)$ for this family of protocols is straightforward-- not so the computation of $e_I(\theta)$, due to the need to optimize the probability that the protocol outputs ``entangled'' over the set of separable input states. Because of this, we replace the maximization over all separable states by a maximization over the three separable states $\ket{0}^{\otimes N}, \ket{+}^{\otimes N}$ and $\ket{+i}^{\otimes N}$. The resulting quantity $\tilde{e}_I(\theta)$ is therefore a lower bound on the actual value of $e_I(\theta)$, i.e., $e_I(\theta)\geq \tilde{e}_I(\theta)$. Since our aim is to show that the automaton-guided GHZ game outperforms the shadow protocols, underestimating the error in the latter is sufficient, as it provides us with a lower bound on the advantages that we observe for the automaton-guided GHZ game.

For different values of $\theta$, we have estimated the type-I and type-II errors $\tilde{e}_I(\theta),e_{II}(\theta)$ of the shadow protocols for the $5$-partite GHZ state in $n=50$ experimental rounds \footnote{Given the small numbers of experimental rounds, we used the mean instead of the `median of means' recommended in \cite{kueng_shadow} to estimate the GHZ fidelity in the shadow protocols.}. To compare these numbers against the performance of the one-round automaton-guided GHZ game, we extended the latter to $n$ experimental repetitions in the following way: 
\begin{enumerate}
\item Run the automaton-guided GHZ game for $n$ experimental rounds, noting down the number of times $n_e$ that the protocol outputs ``entangled''.
\item If $\frac{n_e}{n}$ exceeds the threshold $0\leq\phi\leq 1$, declare the state entangled; otherwise, separable.
\end{enumerate}

Pairs of values $(e_I, e_{II})$ for both families of protocols are plotted in Fig.~\ref{fig:shadow_tomography}. Note that the curve for the automaton-guided GHZ games lies completely below that of the shadow protocols. In fact, for some values of the threshold $\phi$ the automaton-guided GHZ game exhibits values with $e_I\approx e_{II}=0$. This was to be expected. Indeed, let $\bar{e}_{I}\approx 0.53$ be the type-I error of the one-shot GHZ game. Then, for $\phi=1$, the entanglement of the GHZ state is still detected with probability $1$ (and so $e_{II}=0$). Furthermore, a separable state can only be mislabeled as entangled with probability upper bounded by $e_I=\bar{e}_{I}^{50}\approx 0$. 

One might have expected that shadow-based protocols have the upper hand for large values of the system size $N$. Nonetheless, from Fig.~\ref{fig:GHZbounds}, it is clear that the $(e_I(\phi), e_{II}(\phi))$ curve of the automaton-guided protocol will improve with increasing $N$. This is not the case for shadow-based GHZ fidelity estimation, (see Ref.~\cite{kueng_shadow}). Thus, for larger $N$, the advantage of our automaton-guided GHZ game will be even larger. \new{Finally, we remark that we have compared our automaton-guided GHZ game with the shadow-based GHZ fidelity estimation of \cite{kueng_shadow}. However, a better estimate on the GHZ fidelity can be obtained by selecting a different sampling rule, e.g., Paulis from the stabilizer group, following the idea of Ref.~\cite{FlammiaPRL2011}.}
\begin{figure}
  \centering \includegraphics[width=\columnwidth]{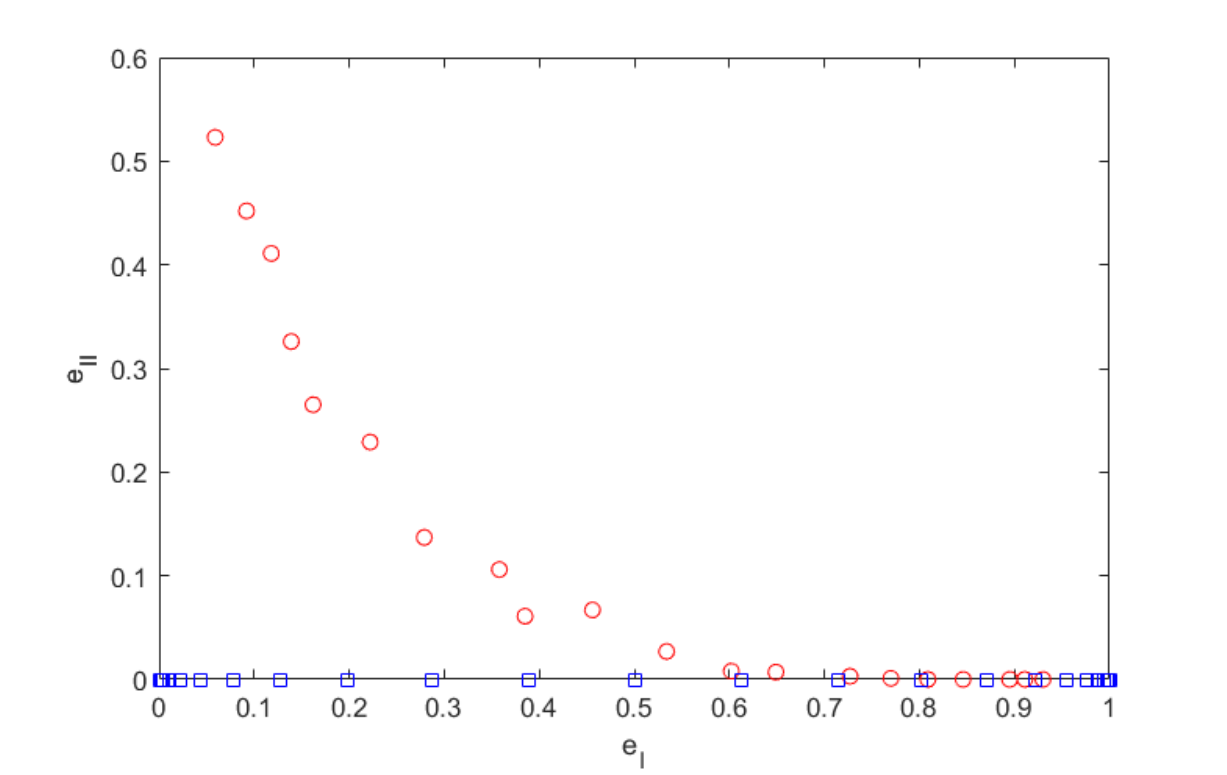}
  \caption{\textbf{Type-I errors versus type-II errors.} The blue squares denote the automaton method; the red circles, the shadow protocols. For some values of the threshold $\phi$, the automaton method is essentially perfect.}
  \label{fig:shadow_tomography}
\end{figure}

\subsection{Entanglement detection in many-body systems: The bit-sequence protocol} \label{sec:entanglement2}

To illustrate how our method allows us to optimize protocols that are genuinely adaptive, we consider here the following family of $N$-round games that we call \emph{bit-sequence protocols}.

In the first round, we perform, with equal probability, each of the following measurements and record the outcome: 
\begin{align*}
 M_1 &=\{ M_{1|1}=\ket{0}\bra{0}, \ M_{2|1}=\ket{1}\bra{1} \} \ , \\M_2 &=\{ M_{1|2}=\ket{+}\bra{+}, \ M_{2|2}=\ket{-}\bra{-} \}. 
\end{align*}
In each of the $N-1$ subsequent rounds, we then use the following procedure. In round $k$, we take the measurement outcome ($1$ or $2$) of round $k-1$ as the new measurement choice. We perform the respective measurement and record the new outcome. After the last measurement in round $N$, we check whether the final outcome is $2$, in which case we assign a reward of $1$; otherwise the reward is $0$. 

For each $N$, the bit-sequence protocol defines an $N$-qubit operator that is applied to the $N$-qubit state of which we aim to certify its entanglement. In the following we will consider an $N$-qubit state, $\rho_N$, that is an eigenstate to the maximal eigenvalue of this operator defined by the $N$-round protocol.

In Fig.~\ref{fig:bit-sequence}, we compare the score, denoted $S_{\rm max}$, that we obtain with $\rho_N$ with the maximum score that can be obtained with a separable state. The score for separable states is computed using the procedure from Sec.~\ref{sec:sequential}, see App~\ref{app:implementation3}.

In this protocol, the maximal score obtained with separable states corresponds -- as in the protocol of Sec.~\ref{sec:entanglement} -- to the maximal type-I error $e_I$, while the score for an entangled state $\rho_N$ corresponds to $1 - e_{II}(\rho_N)$.
\begin{figure}
  \centering \includegraphics[trim=3.6cm 8.7cm 2.9cm 9.6cm, width=\columnwidth]{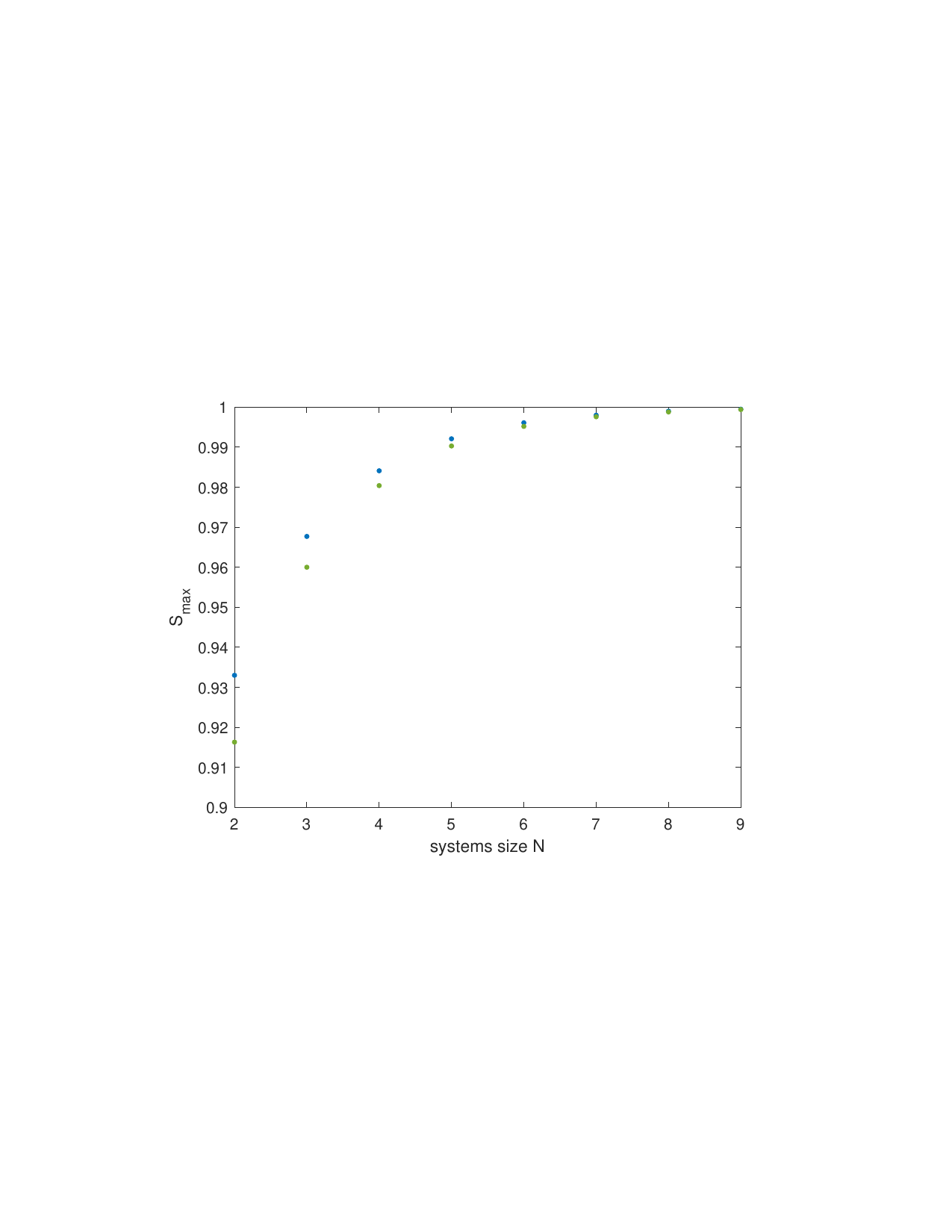}
  \caption{\textbf{Upper bounds for the winning probability of the best separable states compared to $\rho_N$ in a single shot.} The green dots give an upper bound on the score that can be obtained with separable states in the bit-sequence game. We observe that this value lies strictly below the value that we obtain with the states $\rho_N$ (blue).}
  \label{fig:bit-sequence}
\end{figure}

We observe a strict separation between the scores for $\rho_N$ and the upper bounds we obtain for separable states. This demonstrates that the bit-sequence protocols can certify the entanglement of this family of states. Our single-shot protocols can furthermore be turned into $n$-round protocols using, e.g., the ideas of meta-games from Ref.~\cite{prep_games}, in order to amplify this separation.

\section{Asymptotic behavior of sequential models} \label{sec:asymptotics}

In some scenarios, the considered sequential problem admits a natural generalization to arbitrarily many rounds or time steps $N$. In that case, we may also be interested in the asymptotic behavior of the total reward as $N \rightarrow \infty$. However, as we shall see in the following (Sec.~\ref{sec:uncomp}), there are sequential models for which approximately computing the asymptotic maximum reward (or penalty) is an undecidable problem. Thus the best we can hope for are heuristics that work well in practice. In Sec.~\ref{sec:heuristics} we introduce such heuristic methods and we demonstrate in Sec.~\ref{sec:asymptotics_examples} that these methods lead to useful bounds for the problems that we have considered in the finite regime in Secs.~\ref{sec:automaton} and \ref{sec:entanglement}.

\subsection{Sequential models with uncomputable asymptotics}
\label{sec:uncomp}

To see that there are sequential models for which the asymptotic maximum reward cannot be well approximated, we rely on an intermediate result presented in~\cite{undecidability}, where specific probabilistic finite-state automata for which their asymptotic behavior cannot be well approximated were constructed.

Consider a finite-state automaton ${\cal A}$ with a (finite) state and input sets $\Sigma$ and $X$, initial state $\sigma_0\in\Sigma$ and no outputs. The automaton is said to be \emph{freezable} if there exists an input $x_0\in X$ that keeps the automaton in the same internal state. 
Call some subset $A\subset \Sigma$ the set of \emph{accepted states}. Then we can consider how likely it is to bring the state of the automaton from $\sigma_0$ to some accepted state in $n$ time steps by suitably choosing the input word $(x_1,...,x_n)\in X^{\times n}$. Define

\begin{equation}
{\rm val}_n({\cal A})=\max_{\bar{x}\in X^{\times n}}\mbox{Prob}\left(\sigma_n\in A\right).
\end{equation}
The \emph{value of the automaton} ${\rm val}({\cal A})$ is defined as $\sup_n {\rm val}_n({\cal A})$. Note that, if ${\cal A}$ is freezable, then the function ${\rm val}_n({\cal A})$ is nondecreasing in $n$. In that case, 

\begin{equation}
{\rm val}({\cal A})=\lim_{n\to\infty}{\rm val}_n({\cal A}).
\end{equation}

Given a freezable outputless automaton ${\cal A}$, consider the sequential model with $S_k$ being the set of probability distributions over $\Sigma$, ${H}_k=X$, $\Lambda=\emptyset$ and $s_1=\sigma_0$. The equation of motion is

\begin{equation}
P_{k+1}(\sigma')=\sum_\sigma P_{k}(\sigma)P_{{\cal A}}(\sigma'|\sigma, h),
\end{equation}
\noindent where $P_{k+1}\in S_{k+1}, P_{k}\in S_{k}$, and $P_{{\cal A}}$ denotes the transition matrix of the automaton.

In a sequential problem with $n$ rounds, we define the reward of the system as $r_k=0 \ \forall \ k<n$, $r_n(s_n)=\sum_{\sigma \in A} P_n(\sigma)$. From all the above, it is clear that the maximum total reward of the system corresponds to $\nu_n^\star:={\rm val}_n({\cal A})$. Hence, $\lim_{n\to\infty}\nu_n^\star={\rm val}({\cal A})$.

How difficult is to compute the asymptotic total reward for this kind of sequential models? The next result provides a very pessimistic answer.\footnote{The statement in~\cite{undecidability} is more general in the sense that the lemma is proven when additionally requiring the automata to be resettable. This property is, however, not required for our application and we thus omit it from the statement below.}

\begin{lemma}[Lemma~1 from \cite{undecidability}]
For any rational $\lambda \in (0,1]$, there exists a family ${\cal T}_\lambda$ of freezable probabilistic finite-state automata, with alphabet size $|X|=5$ and with $|\Sigma|=62$ states, such that, for all ${\cal A}\in T_\lambda$,
\begin{enumerate}[label=(\alph*)]
    \item ${\rm val}({\cal A}) \geq \lambda$ or ${\rm val}({\cal A}) \leq \frac{\lambda}{2}$.
    \item it is undecidable which is the case.   
\end{enumerate}  
\end{lemma}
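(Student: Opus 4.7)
The plan is to reduce a known undecidable problem to the task of distinguishing the two cases in (a). A natural starting point is the value/emptiness problem for probabilistic finite-state automata (PFAs), shown undecidable by Paz and Rabin: given a PFA $B$ and threshold $\theta$, decide whether $\mathrm{val}(B) > \theta$. Sharpened versions of this result establish that the problem remains undecidable even when one is promised a multiplicative gap between the yes- and no-cases, which is exactly the form of guarantee demanded by part (a).

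The construction I would carry out has three stages. First, build a PFA $B_I$ via the classical reduction from the Post Correspondence Problem (or equivalently from two-counter Minsky machines), such that $\mathrm{val}(B_I) = 1$ whenever the encoded instance $I$ has a solution and $\mathrm{val}(B_I) \leq 1/2$ otherwise. Second, make the automaton freezable by adjoining a fresh input letter $x_0$ whose transition matrix is the identity on every state. This does not alter the value -- any input word can be padded with $x_0$'s without changing the state distribution -- but it guarantees that $\mathrm{val}_n$ is non-decreasing in $n$, and hence that $\sup_n \mathrm{val}_n = \lim_n \mathrm{val}_n$, matching the setup used in the lemma. Third, to realise the threshold $\lambda$ for an arbitrary rational $\lambda \in (0,1]$, prepend a dilution step (via a fresh initial state and one further transition) that routes to the original initial state of $B_I$ with probability $\lambda$ and to an absorbing rejecting state with probability $1-\lambda$; this scales the attainable success probability by $\lambda$, converting the $(1,\leq 1/2)$ gap into the $(\geq \lambda,\leq \lambda/2)$ gap demanded by the statement.

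The hard part will be squeezing the construction down to exactly $|X|=5$ letters and $|\Sigma|=62$ states. Off-the-shelf reductions produce significantly larger automata, because each PCP tile or counter operation introduces its own bookkeeping states and each simulation phase its own alphabet symbols. I would aim for the declared bounds by starting from a maximally compact undecidable model -- for instance, a small universal two-counter Minsky machine, or a PCP instance of tile-cardinality two -- and by aggressively sharing gadgets across the three stages above: in particular, reusing the identity letter $x_0$ as part of the dilution, and merging the rejecting sink with one of the control states of the base PFA. Verifying that no further compression is possible is largely a combinatorial exercise, and I expect this bookkeeping to consume most of the effort; the undecidability itself is essentially a diagonalisation routed through the base reduction and is, by comparison, a routine corollary.
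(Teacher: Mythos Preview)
The paper does not prove this lemma at all: it is quoted verbatim as ``Lemma~1 from \cite{undecidability}'' and used as a black box to conclude that the asymptotic value of certain sequential models is incomputable. There is therefore no proof in the paper to compare your sketch against.

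That said, your outline is in the right spirit of how results of this type are established in the literature. The reduction from a universal undecidable problem (PCP or two-counter machines) to a PFA value problem with a multiplicative gap, followed by adjoining an identity letter to make the automaton freezable and a dilution step to hit an arbitrary rational threshold $\lambda$, is essentially the architecture used in the Gimbert--Oualhadj line of work that the paper cites. Your caveat is also accurate: the specific constants $|X|=5$ and $|\Sigma|=62$ are the output of a careful and somewhat delicate construction in \cite{undecidability}, and recovering them from scratch would require reproducing that paper's bookkeeping rather than any new idea. For the purposes of the present paper, none of that detail is needed---only the existence of such a family is used.
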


Now, consider the class ${\cal C}$ of sequential problems for which the asymptotic value exists and is contained in $[0,1]$. This class contains the sequential problem of computing the value of any automaton in ${\cal T}_1$. It follows that there is no general algorithm to compute the asymptotic value of any sequential problem in ${\cal C}$ up to an error strictly smaller than $1/4$. Otherwise, one could use it to discriminate between the automata in ${\cal T}_1$ with value $1/2$ and those with value $1$, in contradiction with the above lemma.

\subsection{Heuristic methods for computing upper bounds on the asymptotics of sequential models} \label{sec:heuristics}

Upper bounds on the asymptotic value of a sequential model as $N \rightarrow \infty$ can be achieved by deriving a function $\nu(N)$ that upper bounds the solution of the corresponding sequential problem of size $N$ for any $N$ and taking the limit. 
A way to achieve this is to encode, via functional constraints, recursion relations that define value functions for arbitrarily high $N$. For simplicity of presentation, let us consider a sequential model where $r_k=0$ for $k\not=N$ and $f_k$, $S_k$, and $H_k$ are independent of $k$.\footnote{Notice that this restriction is not necessary. In particular, analogous methods can be used to deal with scenarios where the $f_k, r_k$ are $k$-dependent, as long as we can formulate analogous constraints in terms of polynomials of the corresponding bounded variable $\alpha$ below.}

In the following, we show how to derive asymptotic bounds in case of convergence that is polynomial in $\frac{1}{N}$ as well as for exponential decay with $N$. Similar methods can also be used to bound the solution of the sequential problem even when the latter diverges, e.g., they can be used to prove that $\nu(N)\leq 4N^3$ (although, in this case, reducing the corresponding constraints to optimizations over functions of bounded variables might require some regularization trick (e.g.: $V_k\to V_k/k^3$). 

\subsubsection{Sequential models with polynomial speed of convergence in $\frac{1}{N}$}

Let us assume that, for high $N$, $\nu(N)\approx \bar{\nu}+\frac{A}{N}$, and we wish to compute \new{$\bar{\nu}$}. This problem might arise, e.g., when we wish to study the asymptotic type-I error of a quantum preparation game~\cite{prep_games}. 
To simplify the notation, we use negative indices $-j$ to denote $N-j$. Following the formulation in Eq.~\eqref{dual}, in order to obtain an upper bound on the solution of the sequential problem for size $N$, we need to find $\nu$, $\{V_{-j}(s,\lambda)\}_{j=0}^{N-1}$ such that
\begin{align}
V_0(s,\lambda)&\geq r(s,h,\lambda),\label{initial}\\
V_{-(j+1)}(s,\lambda)&\geq V_{-j}(f(s,h,\lambda),\lambda), \forall  j \in \{0,\ldots,N-1\} \label{recursion}\\
\nu&\geq V_{-(N-1)}(s_1,\lambda).
\label{UB_inverted}
\end{align}
The value functions $V_{-j}$ obtained according to Eq.~\eqref{dual} are generally different for each problem size $N$, so they do not help us solve the asymptotic problem.

Introducing an extra parameter into the value functions, in order to account for the increase in $j$, however, allows us to use the same value function from some $k$ on. To see this, suppose that we manage to show that there exist functions $\{W_{-j}(s,\lambda)\}_{j=0}^k$, $W(\alpha,s,\lambda)$ such that
\begin{align}
W_{0}(s,\lambda) &\geq r(s,\lambda,h),\label{initial_asym}\\
W_{\!-j}(s,\lambda) &\geq W_{\!-(j-1)}(f(s,h, \lambda),\lambda), \forall j \! \in \! \{1,\ldots,k \}, \label{recursion_init}\\
W \! \!\left(\!\frac{1}{k},s,\lambda\!\right) \! &\geq W_{-k}(s,\lambda),\label{attachment}\\
W \! \! \left( \!\frac{\alpha}{\alpha+1},s,\lambda \!\right) \!   &\geq W(\alpha,f(s,h,\lambda),\lambda), \ \forall \alpha\in\left[0,\frac{1}{k}\right]
\label{recursion_inf}
\end{align}
\noindent Then, the functions
\begin{equation}
V_{-j}(s,\lambda):=\begin{cases}W_{-j}(s,\lambda) \quad &j=0,\ldots ,k,\\W\left(\frac{1}{j},s,\lambda\right) \quad &j=k+1,\ldots,N-1\end{cases}
\end{equation}
\noindent satisfy the constraints in Eqs.~(\ref{initial}) and (\ref{recursion}). Intuitively, in Eqs.~(\ref{attachment}) and  (\ref{recursion_inf}), the variable $\alpha$ must be understood to represent $\frac{1}{j}$, in which case $\frac{\alpha}{\alpha+1}$ corresponds to $\frac{1}{j+1}$. The constraint in Eq.~(\ref{recursion_inf}) hence implies the recursion relation in Eq.~(\ref{recursion}) for all $j\geq k$, while the case $j< k$ is  taken care of by Eq.~(\ref{recursion_init}). Note that the value of $\alpha$ is bounded; this is important if we wish to use polynomial optimization methods to enforce the inequality constraints in Eq.~(\ref{recursion_inf}) (see Sec.~\ref{sec:asymptotics_examples}).

Furthermore, if, for some real numbers $B,C,...$, the extra condition
\begin{equation}
\bar{\nu}+A\alpha+B\alpha^2+C\alpha^3+...\geq W\left(\alpha,s_1,\lambda\right), \ \forall\alpha\in \left[0,\frac{1}{k}\right],
\label{asymptotic_UB}
\end{equation}
\noindent is satisfied, then, by Eq. (\ref{UB_inverted}), the function $\bar{\nu}+A\alpha+B\alpha^2+C\alpha^3+...$ upper bounds the solution of the sequential problem. A heuristic to upper bound $\bar{\nu}^\star$ on the limit $\lim_{N\to\infty}\nu(N)$ thus consists in minimizing $\bar{\nu}$ under constraints~\footnote{It is advisable, on the grounds of numerical stability, to also restrict the ranges of possible values for $A, B,C,...$.} 
Eqs.~(\ref{initial_asym})-(\ref{recursion_inf}), and (\ref{asymptotic_UB}).

The previous scheme is expected to lead to good upper bounds in situations in which, for high $N$, the optimal value functions $V_{0},...,V_{-(N-1)}$ of the problem given in Eq.~(\ref{dual}) satisfy 
$V_{-(N-1)}\approx \bar{V}+\frac{\tilde{V}}{N}$,
 for some functions $\bar{V},\tilde{V}$. That is, they work well as long as $\lim_{N\to\infty}V_{-(N-1)}$ exists. Experience with dynamical systems suggests, though, that even if the solution of the sequential problem tends to a given value $\bar{\nu}$, the sequence $(V_{-(N-1)})_N$ might asymptotically approach a \emph{limit cycle}. This means that for many problems only every $\mathcal{C}$-th value function follows the required pattern: in such situations,  there  exist a natural number $\mathcal{C}$ (the period) and functions ${V}^0,...,{V}^{\mathcal{C}-1}$ such that, for high enough $N$, $V_{-N}\approx {V}^{R}$, with $R=N\mbox{ (mod }\mathcal{C})$. Thus, instead of a single recursion relation in Eq.~\eqref{recursion_inf}, we need to construct a family of such relations.

\begin{widetext}
This situation can thus be accounted for by considering functions $\{W_{-j}(s,\lambda)\}_{j=0}^k$, $\{W^j(\alpha,s,\lambda)\}_{j=0}^{\mathcal{C}-1}$ such that
\begin{align}
W_{0}(s,\lambda) &\geq r(s,\lambda,h),\nonumber\\
W_{-j}(s,\lambda) &\geq W_{-(j-1)}(f(s,h, \lambda),\lambda), \ j=1,\ldots,k,\nonumber\\
W^0\left(\frac{1}{k},s,\lambda\right) &\geq W_{-k}(s,\lambda),\nonumber\\
W^1\left(\frac{\alpha}{\alpha+1},s,\lambda\right) &\geq W^0(\alpha,f(s,h,\lambda),\lambda), \ \forall \alpha\in\left[0,\frac{1}{k}\right],\nonumber\\
&\vdots\nonumber\\
W^{\mathcal{C}-1}\left(\frac{\alpha}{\alpha+1},s,\lambda\right) &\geq W^{\mathcal{C}-2}(\alpha,f(s,h,\lambda),\lambda), \ \forall \alpha\in\left[0,\frac{1}{k+\mathcal{C}-1}\right],\nonumber\\
W^0\left(\frac{\alpha}{\alpha+1},s,\lambda\right) &\geq W^{\mathcal{C}-1}(\alpha,f(s,h,\lambda),\lambda), \ \forall \alpha\in\left[0,\frac{1}{k+\mathcal{C}}\right].
\label{asymptotic_cons_cycle}
\end{align}
\end{widetext}

\subsubsection{Sequential models with exponential speed of convergence in $N$}
\label{sec:asymptotics_exp}
Suppose that the speed of convergence of the sequential model is exponential, i.e., $\nu(N)\approx \bar{\nu}+A\gamma^N$, for $\gamma<1$. Assuming, for simplicity, that $\lim_{N\to\infty}V_{-(N-1)}$ exists, how could we estimate both $\bar{\nu}$ and $\gamma$? 

One way to check that the postulated convergence rate holds is to verify the existence of functions $\{W_{-j}(s,\lambda)\}_{j=0}^k$, $W(\alpha,s,\lambda)$ such that
\begin{align}
W_{0}(s,\lambda) &\geq r(s,\lambda,h),\nonumber\\
W_{-j}(s,\lambda) &\geq W_{-(j-1)}(f(s,h, \lambda),\lambda), \ j=1,...,k,\nonumber\\
W\left(\gamma^k,s,\lambda\right) &\geq W_{-k}(s,\lambda),\nonumber\\
W(\gamma\alpha,s,\lambda) &\geq W(\alpha,f(s,\lambda,h),\lambda), \ \forall \alpha\in\left[0,\gamma^k\right]
\label{asymptotic_cons_exp}
\end{align}
\noindent and Eq.~(\ref{asymptotic_UB}) hold. The last line of the above equation is the induction constraint, wherein this time $\alpha$ is to be interpreted as $\gamma^{j}$. As before, we have defined $\alpha$ in terms of $j$ such that the former quantity is bounded. To treat this problem within the formalism of convex optimization, one would try minimizing $\bar{\nu}$ for different values of $\gamma$. Each such minimization, if feasible, would result in an upper bound $\nu(N)$ on the solution of the sequential problem with $\nu(N)\approx \bar{\nu}+O(\gamma^N)$.

\subsection{Adaptation to finite numerical precision} \label{errors}
In standard numerical optimizations, the computer outputs value functions that only satisfy the positivity constraints in Eqs.~(\ref{initial}-\ref{asymptotic_UB}) approximately. More generally, conditions of the form $\bullet\geq 0$ are weakened to $\delta+\bullet\geq 0$, for some $1\gg\delta\geq 0$. 

For small values of $\delta$, the conditions in Eqs.~(\ref{initial}-\ref{attachment}) and (\ref{asymptotic_UB}) are not problematic, in the sense that a small infeasibility of any of these inequalities leads to a constant increase in the value function by that amount. This is why in the resolution of Eq.~(\ref{dual}), we have so far not considered numerical precision issues explicitly.  Indeed, let $\{\delta_j\}_{j=k+1}$ be such that
\begin{align}
W_{0}(s,\lambda)+\delta_0 &\geq r(s,\lambda,h),\nonumber\\
W_{-j}(s,\lambda)+\delta_{j} &\geq W_{-(j-1)}(f(s,h, \lambda),\lambda), \forall j \in \{1,\ldots,k\} \nonumber\\
W_{-k}(s,\lambda)  &\leq W\left(\frac{1}{k},s,\lambda\right)+\delta_{k+1},\nonumber\\
  W\left(\alpha,s_1,\lambda\right) &\leq \bar{\nu}+A\alpha+B\alpha^2+C\alpha^2+...+\delta_{\infty}. \label{eq:delta_fin}
\end{align}
Then, the functions $\tilde{W}_{-j}:=W_{-j}+\sum_{i=0}^j\delta_i$ and $\tilde{W}:=W+\sum_{i=0}^{k+1}\delta_i$ satisfy conditions in Eq.~(\ref{initial}-\ref{attachment}). If, in addition, condition in Eq.~(\ref{recursion_inf}) is satisfied by the original functions, then 
\begin{equation}
\bar{\nu}+\sum_{j=0}^{k+1}\delta_k+\delta_\infty 
\end{equation}
is a sound upper bound of the asymptotic value of the game. 

However, precision issues become more problematic if the relation in Eq.~(\ref{recursion_inf}) is only satisfied approximately, i.e., if 
\begin{equation}
W\left(\frac{\alpha}{\alpha+1},s,\lambda\right)+\hat{\delta}\geq W(\alpha,f(s,h,\lambda),\lambda).
\end{equation}
In that case, for $N>k$, the optimal reward of the $N${th} game round is upper bounded by
\begin{equation}
\bar{\nu}+\sum_{j=0}^{k+1}\delta_k+\delta_\infty+\hat{\delta}(N-k).
\end{equation}
That is, the bound grows linearly with $N$. In that case, the asymptotic upper bounds $\bar{\nu}$ obtained numerically can only be considered valid for $N\ll O\left(\frac{1}{\hat{\delta}}\right)$.

\section{Applications of asymptotic optimization of time-ordered processes} \label{sec:asymptotics_examples}

In this section, we bound the asymptotic behavior of the time-ordered processes that have been treated at a finite regime in Secs.~\ref{sec:automaton} and \ref{sec:entanglement}, respectively, relying on the techniques from Sec.~\ref{sec:asymptotics}. While the former is an example of polynomial speed of convergence in $\frac{1}{N}$, the latter illustrates the technique in the exponential case.

\subsection{Asymptotic behavior of finite-state automata}

The probability that a two-state automaton generates a one-tick sequence of length $L$ seems to converge to $0$ as $O\left(1/L\right)$. It can also be verified that this is the convergence rate of the optimal model found in Ref.~\cite{tick_sequence_paper}. In fact, this model gives the probability of the one-tick sequence 
\begin{equation}
p(2k)=\frac{1}{k}\left(1-\frac{1}{k}\right)^{k-1} \approx \frac{1}{\mathbf{e} k}, \text{ for } k \rightarrow \infty,
\end{equation} 
where $\mathbf{e}$ is the basis of natural logarithms, and $p(2k-1)=p(2k)$. This explicit example provides an upper bound on the converge rate for $\nu(L)$, i.e., $\nu(L)$ cannot converge to zero faster than $2/\mathbf{e}L$. Similarly to this model,  the value $\nu(L)$ of the sequential game computed numerically satisfies $\nu(2k)=\nu(2k-1)$ from $k=2$ onward. 
 This suggests that we should apply the relaxation in Eq.~(\ref{asymptotic_cons_cycle}), with $\mathcal{C}=2$. 

Remember that $s_k:=(p_k(0),p_k(1))$, where $p_k(t)$ denotes the probability that the automaton has so far produced  the outcome $0$ and is in state $t$, and $\lambda:=(P(\sigma',0|\sigma):\sigma,\sigma')$, the  transition matrix of the automaton. We allow the discrete-value functions $W_{-k}(s,\lambda)$ to be linear combinations of all monomials of the form $s^a\lambda^b$, with $a,b=0,1,2$. The continuous-value functions $W^1, W^2$ are linear combinations of the monomials
$s^a\lambda^b\alpha^c$, with $a,b,c=0,1,2$. Positivity constraints such as the last line of
Eq.~(\ref{asymptotic_cons_cycle}) are mapped to a polynomial problem by multiplying the whole expression by $(1+\alpha)^2$. 

To solve this problem, we have used the SDP solver MOSEK \cite{mosek}, in combination with YALMIP \cite{yalmip}. After $28$ iterations, MOSEK returned the upper bound $0.0265$, together with the UNKNOWN message. The solution found by MOSEK was, however, almost feasible, with parameters {PFEAS}$=4\times 10^{-7}$;  DFEAS$=8.6\times 10^{-7}$ \texttt{GFEAS}$=7.5\times 10^{-11}$ and {MU}$=1.3\times 10^{-7}$. It is therefore reasonable to assume\footnote{To further support the conjecture that the value reported by MOSEK is close to the actual solution of the SDP, we considered the value functions $\{W_{-j}\}_{j=0}^3, W^1, W^2$ output by the solver and then used the Lasserre hierarchy to verify that relations (\ref{asymptotic_cons_cycle}) held. We only found tiny violations of positivity (of order $-10^{-5}$) for the last two recursion relations.} that the \emph{actual} solution of the SDP, which for some reason MOSEK could not find, is indeed close to $2\times 10^{-2}$. This figure is, indeed, greater than the conjectured null asymptotic value, but nonetheless a nontrivial upper bound thereof.

\subsection{Asymptotics of entanglement-detection protocols} \label{sec:asymptotics_ghz}

We observe that for small $N$, $p_{\rm max}^{\rm sep}$ behaves as 
\begin{equation}
p_{\rm max}^{\rm sep} \approx \frac{1}{2} + \frac{1}{2^N}. \label{eq:asymptoticGHZ}   
\end{equation}
This value of $p^{\rm sep}$ can also be achieved, by preparing the separable state $\ket{+}^{\otimes N}$. For completeness, we illustrate this in App.~\ref{app:lower_bound_GHZ}. This shows that our bounds from Fig.~\ref{fig:GHZbounds} for the finite $N$ regime are tight (up to numerical precision).

The behavior in Eq.~\eqref{eq:asymptoticGHZ} also implies that we expect an asymptotic value of $\frac{1}{2}$. Implementing the procedure to bound the asymptotic behavior from Sec.~\ref{sec:asymptotics_exp}, we have been able to confirm an upper bound of $0.5004$, before running into precision problems using YALMIP \cite{yalmip} and the solver MOSEK \cite{mosek}. For the implementation, we have extended the one described in App.~\ref{app:lower_bound_GHZ}. with the additional relations from Sec.~\ref{sec:asymptotics_exp}. Note that a value of $\frac{1}{2}$ corresponds to randomly guessing whether the state is entangled {and is achievable for any $N$ with the state in Eq.~\eqref{state}. It is thus a lower bound on $e_I$.} Note that gap between the success probability of $1$ for the GHZ state and approximately $\frac{1}{2}$ for the worst-case separable state, approximately reaches the maximal possible value of such a gap of $\frac{1}{2}$ {(as explained in Eq.~\eqref{bound})}. This illustrates that the multiround GHZ game performs essentially optimally for the task of certifying the entanglement of a GHZ state.

\section{Policies with guaranteed minimum reward}
\label{sec:policies}
Let us complicate our sequential model a little bit, by adding a \emph{policy}. The idea is that the parameters $x$ describing our policy also affects the  evolution of the system, i.e.,
\begin{equation}
    s_{k+1}=f_k(s_k, h_k, \lambda, x)
    \label{motion_eq_policy}
\end{equation}
\noindent as well as the reward $r_k(s_k, h_k, \lambda, x)$ obtained in each time step.

Ideally, we would like to find the policy $x$ that minimizes the maximum penalty. That is, we wish to solve the problem
\be
\min_{x}\max_{h,\lambda} \ \sum_{k=1}^N r_k(s_k, h_k, \lambda, x),
\ee
\noindent under the assumption that Eq.~(\ref{motion_eq_policy}) holds and that we know the initial state $s_1$.

For fixed $x$, though, it might be intractable to compute the minimum expected reward. A more realistic and practical goal is thus to find a policy $x$ and a sufficiently small upper bound $\nu$ on its maximum reward (or penalty) $\nu^\star(x)$. This is the subject of the next section.

\subsection{Optimization over policies}
For any $x$, let $\nu(x)$ denote an upper bound on $\nu^\star(x)$, obtained through some relaxation of problem given in Eq.~(\ref{dual}). We propose to use standard gradient-descent methods or some variant thereof, such as Adam \cite{adam}, to identify a policy $x^\star$ with an acceptable guaranteed minimum reward $\nu(x^\star)$.

More concretely, call $X$ the set of admissible policies, and assume that $X$ is convex. Starting from some guess $x^{(0)}$ on the optimal policy, and given some decreasing sequence of non-negative values $(\epsilon_k)_k$, \emph{projected gradient descent} works by applying the iterative equation
\begin{equation}
    x^{(k+1)}=\pi_X\left(
    {{x^{(k)}}}-\epsilon_k\nabla_x\nu(x^{(k)})\right),
  \label{gradient_descent}
\end{equation}
\noindent where $\pi_Z(x)$ denotes the projection of vector $x$ in set $Z$, i.e., the vector $z\in Z$ closest to $x$ in Euclidean norm. For $k\gg 1$, we would expect policy $x^{(k)}$ to be close to minimal in terms of the guaranteed reward $\nu(x^{(k)})$.

The use of gradient methods requires estimating $\nabla_x\nu(\bar{x})$ for arbitrary policies $\bar{x}$. Considering that we have so far obtained bounds $\nu(x)$ (for fixed $x$) numerically from relaxations of Eq.~\eqref{dual}, this task requires additional theoretical insight. The rest of this section is thus concerned with formulating an optimization problem to estimate the gradient of $\nu(x)$.

Suppose, for simplicity, that $x$ consists of just one real parameter, i.e., we wish to compute $\frac{d}{dx}\nu(\bar{x})$. Fix the policy to be $\bar{x}$, let $\delta x>0$, and let us consider problem given in Eq.~(\ref{dual}). To arrive at a tractable problem, one normally considers a simplification of the following type:
\begin{align}
    \nu(\bar{x})&:=\min_{V_1,...,V_N,\nu} \nu\nonumber\\
    \mbox{s.t. }&V_{N}(s_N,\lambda)\!  \geq_Q \!  r_N(s_N,h,\lambda, \bar{x}),\nonumber\\
    &V_k(s_k,\lambda) \! \geq_Q \! r_k(s_k,h,\lambda,\bar{x}) \! + \! V_{k+1}(f_{k+1}(s_k,h_k,\lambda),\lambda),\nonumber\\
    &\nu \! \geq_Q \!  V_1(s_1,\lambda).
    \label{dual_ansatz}
\end{align}
\noindent Here the relation $a\geq_Q b$ signifies that the expression $a-b$, evaluated on some convex superset $Q$ of the set of probability distributions on the variables $\{s_k,h_k\}_k,\lambda$, is non-negative. In the polynomial problems considered in this paper, this superset $Q$ is, in fact, dual to the positivstellensatz used to enforce the positivity constraints. In the case of Putinar's positivstellensatz, $Q$ would correspond to the set of monomial averages $\{\langle z_{i_1}...z_{i_N}\rangle\}$ that, arranged in certain ways, define positive semidefinite matrices (for details, see \cite{lasserre}). 

Since $Q$ contains all probability distributions, $a\geq_Q b$ implies that $a\geq b$, and so the solution of Eq.~(\ref{dual_ansatz}) is a feasible point of Eq.~(\ref{dual}) and hence an upper bound $\nu(\bar{x})$ on $\nu^\star(\bar{x})$. For the time being, we  assume that the minimizer $V_1,...,V_N$ of problem given in Eq.~(\ref{dual_ansatz}) is unique for policy $\bar{x}$. We will drop this assumption later.

Now, let us consider how problem given in Eq.~(\ref{dual_ansatz}) changes when we replace $\bar{x}$ by $\bar{x}+\delta x$. We assume not only that the solution satisfies $\nu(\bar{x}+\delta x)\approx \nu(\bar{x})+\mu\delta x$, but also that the optimal slack variables $V_1,...,V_N$ also experience an infinitesimal change, i.e., 
\begin{equation}
V_k(s_k,\lambda)\to V_{k}(s_k,\lambda) + \hat{V}_k(s_k,\lambda)\delta x.
\label{diff_cond}
\end{equation}
\noindent That is, not only is the solution differentiable, but also the minimizer of the optimization problem.

\begin{widetext}
Ignoring all terms of the form $o(\delta x)$ in Eq.~\eqref{diff_cond}, it is clear that $\nu+\mu\delta x$ is an upper bound on $\nu^\star(\bar{x}+\delta x)$ if and only if
\begin{align}
    V_N(s_N,\lambda)+\hat{V}_N(s_N,\lambda)\delta x \geq_Q & \ r_N(s_N,h,\lambda,\bar{x})+\frac{\partial}{\partial x} r_N(s_N,h,\lambda,\bar{x})\delta x,\label{diff_N}\\
    V_k(s_k,\lambda)+\hat{V}_k(s_k,\lambda)\delta x \geq_Q & \ r_k(s_k,h,\lambda,\bar{x})+\frac{\partial}{\partial x} r_k(s_k,h,\lambda,\bar{x})\delta x+V_{k+1}(s_{k+1},\lambda)+\hat{V}_{k+1}(s_{k+1},\lambda)\delta x \\
    &\ +\delta x\sum_{\sigma}\frac{\partial V_{k+1}}{\partial s_{k+1}^\sigma}\frac{\partial f^\sigma_k(s_k,h_k,\lambda,x)}{\partial x}\nonumber\\
    \nu+\mu\delta x \geq_Q&  \ V_1(s_1,\lambda)+\hat{V}_1(s_1,\lambda)\delta x\label{diff_nu}.
\end{align}
Here, the symbol $\sigma$ is used to denote the coordinates of $s_{k+1}$.
\end{widetext}

Our goal is now to turn this into an optimization problem that allows us to find $\mu$. For this purpose it is useful to invoke the fact that the slack variables $\{V_k\}_k$ are optimal, i.e., they not only satisfy the conditions in the problem given in Eq.~\eqref{dual_ansatz}, but all inequalities must be saturated. Namely, for each $k$, there exists $q\in Q$ such that 
\begin{equation}
    \left\langle V_k(s_k,\lambda)-r_k(s_k,h_k,\lambda)- V_{k+1}(s_{k+1},\lambda)\right\rangle_q=0.
    \label{saturate}
\end{equation}
To proceed, we need the following lemma, which allows us to phrase the constraints in Eqs.~\eqref{diff_N}-\eqref{diff_nu} in terms of the quantities $\hat{V}_k$. 

\begin{lemma}
Let $a\geq_Q 0$, and let the set $Q':=\{q: q\in Q,\langle a\rangle_q=0\}$ be nonempty. Then, $a+\hat{a}\delta x\geq_Q 0$ for $\delta x\to 0^+$ if and only if $\hat{a}\geq_{Q'} 0$. 
\end{lemma}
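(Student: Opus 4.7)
The plan is to establish both implications directly from the definitions, using the observation that $Q'$ is precisely the level set $\{q \in Q : \langle a \rangle_q = 0\}$ on which the nonnegative linear functional $a$ is saturated. In particular, for any $q\in Q$ either $\langle a\rangle_q>0$ or $q\in Q'$, and this dichotomy is what drives both directions.

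For the forward direction, I will fix an arbitrary $q \in Q'$ and exploit that $\langle a \rangle_q = 0$ by construction, so that $\langle a + \hat{a}\,\delta x \rangle_q = \delta x \,\langle \hat{a} \rangle_q$. The hypothesis that this expression is nonnegative for all sufficiently small $\delta x > 0$ then forces $\langle \hat{a} \rangle_q \geq 0$, which, since $q\in Q'$ was arbitrary, is exactly the conclusion $\hat{a} \geq_{Q'} 0$.

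For the converse, I will pick an arbitrary $q \in Q$ and split according to the above dichotomy. If $\langle a \rangle_q > 0$, then the inequality $\langle a + \hat{a}\,\delta x \rangle_q \geq 0$ holds for every $\delta x \in [0,\langle a\rangle_q/\max(1,|\langle \hat a\rangle_q|)]$, simply because $\langle \hat{a} \rangle_q$ is a fixed real number. If instead $\langle a \rangle_q = 0$, then $q$ belongs to $Q'$ by definition, and the assumption $\hat{a} \geq_{Q'} 0$ immediately yields $\langle \hat{a} \rangle_q \geq 0$, so $\langle a+\hat a\,\delta x\rangle_q=\delta x\langle\hat a\rangle_q$ is nonnegative for every $\delta x \geq 0$.

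The only delicate point I anticipate is the quantifier structure on $\delta x$: in the pointwise reading — a threshold per $q$ — the argument above is complete, and this is the reading that is invoked in the subsequent differential manipulation of \eqref{diff_N}--\eqref{diff_nu}. If a uniform threshold over $Q$ were required, I would additionally invoke compactness of the normalized dual section $\{q\in Q:\langle 1\rangle_q=1\}$ — which holds in the Putinar/Lasserre SDP setup whenever the underlying variables lie in a bounded set — together with continuity of $q\mapsto\langle a\rangle_q$ and $q\mapsto\langle\hat a\rangle_q$, running a standard contradiction-by-subsequence argument along any hypothetical sequence $q_n\in Q$ with $\langle a+\hat a\,\delta x_n\rangle_{q_n}<0$ and $\delta x_n\to 0^+$ to produce a limit point in $Q'$ violating $\hat a\geq_{Q'}0$. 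This is the only place where the structure of $Q$ beyond mere convexity enters the proof.
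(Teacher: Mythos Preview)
Your proof is correct and follows essentially the same dichotomy argument as the paper: the forward direction restricts from $Q$ to $Q'$, and the converse splits $q\in Q$ into the cases $\langle a\rangle_q>0$ (where the perturbation is harmless in the limit) and $q\in Q'$ (where the hypothesis applies). Your discussion of the pointwise-versus-uniform quantifier on $\delta x$ is a thoughtful addition that the paper leaves implicit.
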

\begin{proof}
$Q'\subset Q$, so $a+\hat{a}\delta x\geq_Q 0$ implies that $\hat{a}\geq_{Q'} 0$. To see the converse implication, note that, for $q\in Q\setminus Q'$, $\langle a\rangle_q>0$ and therefore $\lim_{\delta x\to 0^+}\langle a+\hat{a}\delta x\rangle_q> 0$. To determine if $\langle a+\hat{a}\delta x\rangle_q\geq 0$ for all $q\in Q$ in the limit $\delta x\to 0^+$, it is thus sufficient to check that the property holds for $q\in Q'$, i.e., that $a\geq_{Q'} 0$. 
\end{proof}

By duality, condition $\hat{a}\geq_{Q'} 0$ is equivalent to

\begin{equation}
\xi a+\hat{a}\geq_Q 0,\mbox{ for some } \xi\in\mathbb{R}.
\end{equation}

\noindent From the above equation and relations in Eq.~(\ref{saturate}), it follows that (under the hypothesis that the minimizer $V$ is differentiable and unique at $\bar{x}$) the limit $D^{+}\nu(\bar{x})\equiv \lim_{\delta x\to 0^+}\frac{\nu(\bar{x}+\delta x)-\nu(\bar{x})}{\delta x}$ equals

\begin{widetext}
\begin{align}
    &\min_{\mu,\hat{V},\xi} \mu\nonumber\\
    \mbox{such that } \ &\hat{V}_N(s_k,\lambda)\geq_Q \frac{\partial}{\partial x} r_N(s_N,h_N,\lambda,\bar{x})+\xi_N\left(V_N(s_k,\lambda)-r_N(s_N,h_N,\lambda,\bar{x})\right),\nonumber\\
    &\hat{V}_k(s_k,\lambda)\geq_Q \frac{\partial}{\partial x} r_k(s_k,h,\lambda,\bar{x})+\hat{V}_{k+1}(s_{k+1},\lambda)+\sum_{\sigma}\frac{\partial V_{k+1}(s_{k+1},\lambda)}{\partial s_{k+1}^\sigma}\frac{\partial f^\sigma_k(s_k,h_k,\lambda,x)}{\partial x}+\nonumber\\
    & \qquad \qquad \quad +\xi_k\left(V_k(s_k,\lambda)-r_k(s_k,h_k,\lambda,\bar{x})-V_{k+1}(s_{k+1},\lambda)\right),\nonumber\\
    &\mu\geq_Q \hat{V}_1(s_1,\lambda)+\xi_1\left(\nu-V_1(s_1,\lambda)\right).
    \label{perturbed}
\end{align}
If, in addition, the sets
\begin{align}
&Q'_k:=\{q\in Q:q\left(V_k(s_k,\lambda)-r_k(s_k,h_k,\lambda,\bar{x})-V_{k+1}(s_{k+1},\lambda)\right))=0\}, k=1,...,N,\nonumber\\
&Q'_0:=\{q\in Q:q\left(\nu-V_1(s_1,\lambda)\right))=0\}
\end{align}
have each cardinality $1$, i.e., $Q_k'=\{q_k\}$, the problem to solve becomes even simpler, namely:
\begin{align}
    &\min_{\mu,\hat{V},\xi} \mu\nonumber\\
    \mbox{such that } \ &q_N\left(\hat{V}_N(s_k,\lambda)\right)= q_N\left(\frac{\partial}{\partial x} r_N(s_N,h_N,\lambda,\bar{x})\right),\nonumber\\
    &q_k\left(\hat{V}_k(s_k,\lambda)\right)= q_k\left(\frac{\partial}{\partial x} r_k(s_k,h,\lambda,\bar{x})+\hat{V}_{k+1}(s_{k+1},\lambda)+\sum_{\sigma}\frac{\partial V_{k+1}(s_{k+1},\lambda)}{\partial s_{k+1}^\sigma}\frac{\partial f^\sigma_k(s_k,h_k,\lambda,x)}{\partial x}\right),\nonumber\\
    &\mu= q_0\left(\hat{V}_1(s_1,\lambda)\right).
    \label{perturbed_single_q}
\end{align}
\end{widetext}

The derivation of Eq.~(\ref{perturbed}) relies on the hypothesis that the optimal slack variables $V:=\{V_k\}_k$ for policy $x=\bar{x}$ are unique. Should  this not be true, the procedure to compute $D^+\nu(\bar{x})$ requires solving a nonconvex-optimization problem. The reader can find it in App.~\ref{app:gradient}, together with a heuristic to tackle it.

Now, let us return to the case of a multivariate policy $x$. If the gradient $\nabla_x \nu(\bar{x})$ exists, then its 
$i${th} entry corresponds to the limit
\begin{equation}
\lim_{\delta x\to 0^+}\frac{\nu(\bar{x}+\delta x\ket{i})-\nu(\bar{x})}{\delta x}.
\end{equation}
\noindent Each of these entries can be computed, in turn, via the procedure sketched above. This requires solving an optimization problem of complexity comparable to that of computing $\nu(\bar{x})$. Moreover, each such optimization can be performed separately for each coordinate of $x$. Namely, the process of computing the gradient of $\nu(\bar{x})$ can be \emph{parallelized}. This allows us, through Eq.~(\ref{gradient_descent}), to optimize over policies consisting of arbitrarily many parameters, as long as we have enough resources to compute $\nu(\bar{x})$.

\section{Application: optimization of adaptive protocols for magic state detection}
\label{sec:magic}
Magic states, i.e., states that cannot be expressed as convex combinations of stabilizer states, are a known resource for quantum computation: together with Clifford gates, they allow us to conduct universal quantum operations efficiently \cite{magic1}. This raises the problem of certifying whether a given source of states is able to produce them \cite{magic2}. 

In the qubit case, magic states are those that cannot be expressed as a convex combination of the eigenvectors of the three Pauli matrices. 
Equivalently, magic states are those the Bloch vector $\vec{n}$ of which violates at least one of the inequalities:
\begin{equation}
\left\{\sum_{j=1}^3 n_ja_j\leq 1: a_1,a_2,a_3\in\{-1,1\}\right\},
\end{equation}
(for an illustration, see Fig.~\ref{fig:magic}).
\begin{figure}[h]
  \centering \includegraphics[width=7cm]{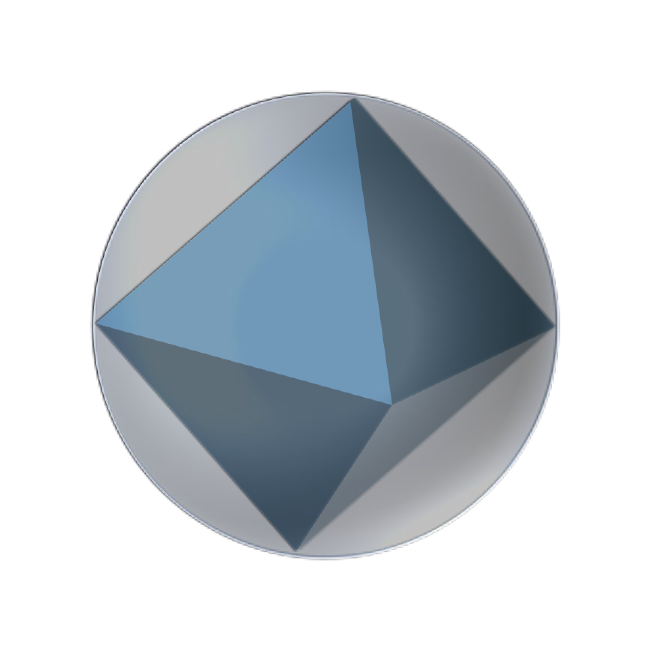}
  \caption{\textbf{Qubit magic states.} Represented in the Bloch sphere, magic states of a qubit are all those that lie outside the blue octahedron. In this section the goal is to certify whether a general state is magic or not.}
  \label{fig:magic}
\end{figure}

We wish to devise an $N$-round adaptive measurement protocol that tells us whether a given source is preparing magic states. Namely, if the source can just produce nonmagic states, we expect the protocol to declare the source ``magic'' with low probability $e_{I}$ (the protocol's type-I error). If, however, the source is actually preparing independent copies of any quantum state violating one of these inequalities by an amount greater than or equal to $\delta$, we wish the protocol to declare the source `non-magic' with low probability $e_{II}$ (the type-II error of the protocol), see Fig.~\ref{fig:magic2}.
\begin{figure}[h]
  \centering \includegraphics[width=7cm]{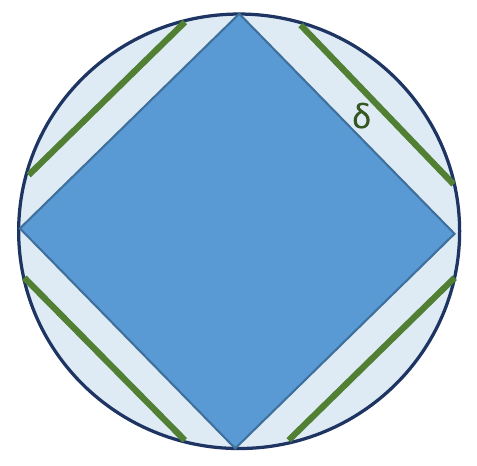}
  \caption{\textbf{An illustration of magic state detection.} We display a cut through the Bloch sphere showing nonmagic (blue square) and magic states (region outside the square). The goal is to devise an $N$-round protocol that is able to reliably classify states into magic and nonmagic ones, as long as the considered magic states are at least $\delta$ away from the nonmagic ones. The protocol derived below, based on a two-state automaton, is valid in one of the regions outside the square of nonmagic states.}
  \label{fig:magic2}
\end{figure}

To do so, we formulate the protocol as a time-independent preparation game \cite{prep_games}. That is, in every measurement round $k$, our knowledge of the  magic of the state is encoded in the configuration $t\in T$ of the game, with $|T|<\infty$. The state prepared by the source is then measured by means of the POVM with elements $(M_{t'|t}: t'\in T)\subset B(\C^2)^{\times |T|}$. The result $t'$ of this measurement is the  new configuration of the game. When we exceed the total number of measurement rounds $n$, we must guess, based on the current game configuration, whether the state source is magic or not. For simplicity, we declare the source to be magic if $t\in A\subset T$, with $|A|=\left\lfloor\frac{|T|}{2}\right\rfloor$. The initial state of the game, $t_1$, is chosen such that $t_1\not\in A$.

The game is thus defined by the POVMs $\{ M_{t'|t}: t' \in T \}_t$. For computing $e_{I}$, we assume that the `dishonest' player, who claims to produce magic states but does not, knows at every round the current game configuration. They can thus adapt their state preparation correspondingly to increase the type-I error. As explained in 
Sec.~\ref{sec:prep_games}, the computation of the maximum score of a preparation game under adaptive strategies can be cast as a sequential problem with $h_k:=(\rho_k(t):t\in T_k)$. In this case, the set of feasible states, $C$, is generated by convex combinations of $\{\psi_i\}_{i=1}^6$, the eigenvectors of the three Pauli matrices: linear maximizations over $C$ therefore correspond to maximizations over these six ``vertices''. Putting it all together, we find that computing the  type-I error of the game amounts to applying the recursion relation:
\begin{align}
\mu_N(t) &=\max_i \sum_{t'\in A}\tr(\proj{\psi_i}M_{t'|t}),\nonumber\\
\mu_k(t) &=\max_i\sum_{t'}\tr(\proj{\psi_i}M_{t'|t})\mu_{k+1}(t'),\nonumber\\
e_I &=\mu_1(t_1).
\end{align}

Let us assume that all the maximizations above have a unique maximizer, and call $i(k,t)$ the maximizer corresponding to the $k$th round and game configuration $t$. Then, the gradient of $e_I$ with respect to the POVM element $M_{\tau'|\tau}$ satisfies the recursion relations:
\begin{align}
\vec{\nabla}\mu_N(t) &=\delta_{\tau,t}\proj{\psi_{i(N,\tau)}},\nonumber\\
\vec{\nabla}\mu_k(t) & =\sum_{t'}\tr(\proj{\psi_{i(k,t)}}M_{t'|t})\vec{\nabla}\mu_{k+1}(t') \nonumber\\ &\qquad \qquad \ +\delta_{\tau,t}\mu_{k+1}(\tau')\proj{\psi_{i(k,\tau)}},\nonumber\\
\vec{\nabla}e_I &=\vec{\nabla}\mu_1(t_1).
\end{align}
It can hence be computed efficiently in the number of measurement rounds.

Now, consider the computation of the type-II error under IID strategies, the formulation of which as a sequential problem can be found in Sec.~\ref{sec:iid_strategies}. We first focus on the set ${\cal S}_a$ of qubit states with Bloch vector $\vec{n}$ satisfying $\sum_ja_jn_j\geq 1+\delta$, for some $a_1,a_2,a_3\in\{-1,1\}$. Applied to a source that always produces the same state $\rho\in {\cal S}$, the magic detection protocol can be modeled through a sequential model, with internal state $s_k$ at time $k$ given by the probability distribution over the game configurations $P_k(t)$ in round $k$, just before the measurement. The state $\rho\in{\cal S}$ prepared by the source is to be identified with the evolution parameters $\lambda$, as the equation of motion of the model is:

\begin{equation}
P_{k+1}(t')=\sum_{t}\tr(\rho M_{t'|t})P_k(t).
\end{equation}
To calculate the type-II error, we assign the model the rewards $r_k=0$, for $k=1,...,N-1$ and 

\begin{equation}
r_N(P_N,\rho)=\sum_tP_N(t)\sum_{t'\in A}\tr(\rho M_{t'|t}).
\end{equation}

We can thus compute an upper bound on $e_{II}({\cal S}_a)$, the maximum type-II error achieved with states in class ${\cal S}_a$, via the SDP relaxation of Eq.~(\ref{dual}); and its gradient with respect to the POVM element $M_{t'|t}$, via Eq. (\ref{perturbed}) \footnote{The dual problem of (\ref{dual}) does not seem to have a unique solution, because we find that the use of Eq. (\ref{perturbed_single_q}) results in infeasible semidefinite programs.} We assume that the solution of the problem given in Eq.~(\ref{dual}) is unique, so the techniques of App.~\ref{app:gradient} are not necessary. 

Our original problem, though, is to compute the type-II error of all states violating at least one of the facets of the magic polytope by an amount $\geq \delta$.  Thus we have that $e_{II}=\max_ae_{II}({\cal S}_a)$. The gradient (or, more properly said, subgradient) of this function is $\vec{\nabla}e_{II}({\cal S}_{a^\star})$, where $a^\star$ is the argument of $\max_ae_{II}({\cal S}_a)$.

We are now ready to apply gradient descent to minimize the combined error $e_I+e_{II}$. We have used the gradient method Adam \cite{adam}, followed at each iteration by a projection onto the (convex) set of protocols $\{M_{t'|t}:t'\in T\}_{t \in T}$, with $M_{t'|t}\geq 0,\forall t,t'\in T, \sum_{t'}M_{t'|t}=\id,\forall t\in T$. For $|T|=2$, $n=6$, gradient descent converges to a value $e_I+e_{II}\approx 1$, probably an indication that one cannot detect all magic states with a two-state automaton. If, however, we restrict the problem to that of detecting those states that satisfy the inequality $\sum_i n_i\geq 1+\delta$, we arrive at the plot shown in Fig.~\ref{fig:gradient_descent}.

\begin{figure}
  \centering \includegraphics[width=9cm]{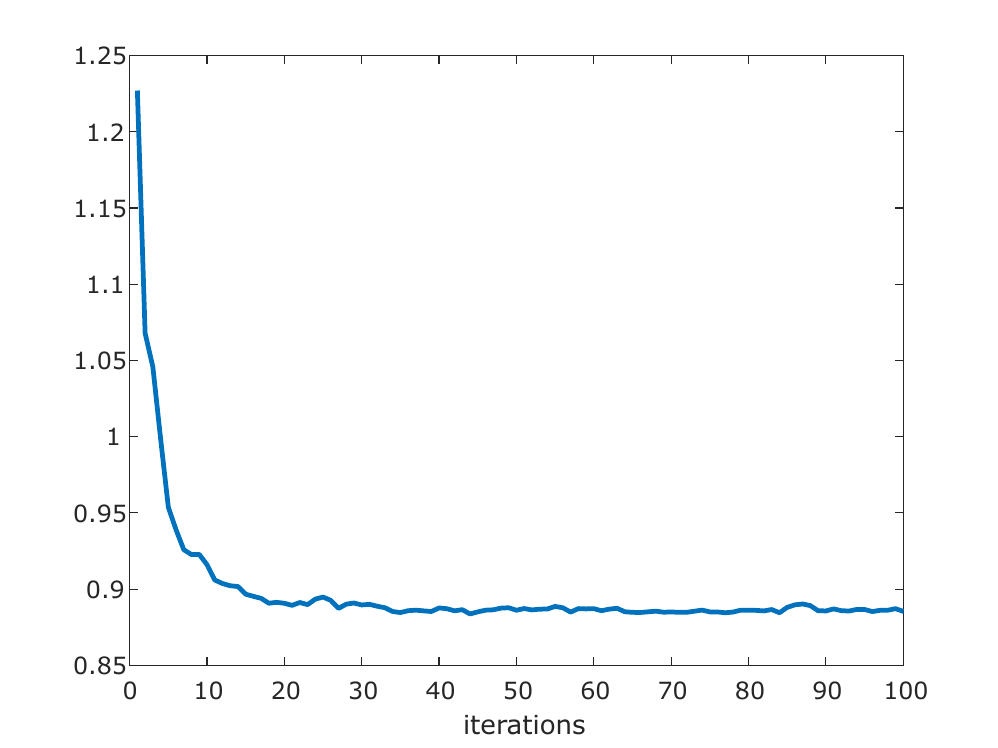}
  \caption{\textbf{The upper bounds for $e_I+e_{II}$ as a function of the number of gradient iterations.} Starting from a random preparation game, we have used the Adam algorithm \cite{adam} to decrease the sum of the type-I and type-II errors of a magic state detection protocol. For this computation, we set the  learning rate of the algorithm to $0.01$. }
  \label{fig:gradient_descent}
\end{figure}

As the reader can appreciate, after a few iterations, the algorithm successfully identifies a strategy with $e_I+e_{II}<1$. Beyond $50$ iterations, the objective value does not seem to decrease much further. This curve has to be understood as a proof of principle that gradient methods are useful to devise preparation games. Note that, for a general preparation game, the number of components of the gradient is $4|T|^2$. Optimizations over preparation games with high $|T|$ would thus benefit from the use of parallel processing to compute the whole gradient vector.

\section{Conclusions} \label{sec:conclusion}
We have proposed a method to relax sequential problems involving a finite number of rounds. This method allows us to solve optimization problems of sizes that have not been tractable with previous convex optimization techniques. Moreover, a variant of the method generates upper bounds on the solution of the problem in the asymptotic limit of infinitely many rounds. 

We have demonstrated the practical use of our methods by solving an open problem: namely, we have computed the maximum probability for any finite-state automaton of a certain fixed dimension to produce certain sequences, thus proving that the values conjectured in Refs.~\cite{tick_sequence_paper,Vieira_Budroni} are optimal (up to numerical precision). We further illustrated how to turn such bounds into analytical results. 

As we have shown, the methods are also relevant in the context of certifying properties of quantum systems, especially in entanglement detection. Specifically, our method enables us to bound the misclassification error for separable states in multiround schemes as well as in the asymptotic limit, which we illustrate by means of a GHZ-game-inspired protocol, for which we also show that our bounds are tight. 

Finally, we have explained how to combine these methods with gradient-descent techniques to optimize over sequential models with a certified maximum reward. This has allowed the computer to generate an adaptive protocol for magic-state detection.

In some of our optimization problems, we have found that otherwise successful SDP solvers, such as MOSEK \cite{mosek}, have failed in some cases to output a reliable result. The cause of this atypical behavior is unclear, and we lack a general solution for this issue. That said, the results from the numerical optimization have still allowed us to extract and confirm reliable bounds from the optimizations in most cases.

In two of the considered applications we found tight upper bounds on the quantities of interest. This may indicate that at least in reasonably simple cases our hierarchical method converges at relatively low levels of complexity. Finding conditions for convergence at such low levels is an interesting question that we leave open.

Quantum technologies are currently at the verge of building computing devices that can be used beyond purely academic purposes.
These involve more than a few particles and thus the problem of certifying properties of systems of intermediate sizes is currently crucial. 
This is relevant not only from the perspective of a company building such devices but also from the perspective of a user who may want to test them 
independently. {This progress is somewhat in tension with the state of research in certification, e.g. in entanglement theory, where methods are only abundant in the few-particle regime. Our GHZ game illustrates that automaton-guided protocols may bridge this gap: modeling one-way LOCC protocols through sequential processes might allow the certification of large classes of many-body systems, thus offering an alternative to protocols based on shadow tomography~\cite{aaronson_shadow, kueng_shadow}.}

Our methods are furthermore able to certify systems beyond the linear regime that is employed when considering the usual types of witnesses for certifying
systems, which from a mathematical viewpoint essentially distinguish convex sets by finding a separating hyperplane. As suggested in the example of magic states, our techniques beyond this regime also mean that we can analyse more general classes of systems simultaneously. Indeed, using a finite-state automaton of large enough dimension, we expect to be able to detect any states that have at least a certain distance from the set of nonmagic states. As optimization methods improve, we expect such \emph{universal} protocols to become more and more common as they are also applicable  with minimal knowledge about the state of the system at hand.

Finally, many different problems encountered in quantum information theory have a sequential structure similar to the one analyzed in this paper. Thus we expect that our methods will soon find applications beyond those studied in the present work. In this regard, it would be interesting to adapt our techniques to analyze quantum communication protocols. Another natural research line would be to extend our methods to model interactions with an evolving quantum system, rather than a (classical) finite state automaton.

\acknowledgements
\noindent
This research was funded in whole, or in part, by the Austrian Science Fund (FWF) [Projects 10.55776/M3109 (Lise-Meitner), 10.55776/ZK3 (Zukunftskolleg), 10.55776/F71 (BeyondC), and 10.55776/P35509] and the Swiss National Science Foundation (Ambizione PZ00P2\_208779).

\appendix

\section{The Lasserre-Parrilo hierarchy and variants}\label{app:polynomials} \label{sec:polynomials}

The Lasserre-Parrilo method provides a complete hierarchy of sufficient conditions to certify that a polynomial is non-negative, if evaluated in a compact region defined by a finite number of polynomial inequalities \cite{lasserre, parrilo}. Given a number of polynomials $g_1(z),...,g_u(z), \tilde{g}_1(z),...,\tilde{g}_v(z)$, we define the region $Z\subset \R^n$ as
\begin{equation}\label{eq:def_Z_set}
Z=\{z\in\R^n:g_1(z),...,g_u(z)\geq 0, \tilde{g}_1(z)=...=\tilde{g}_v(z)=0\}
\end{equation}
\noindent Let $p(z)$ be a polynomial, and consider the problem of certifying that, for all $z\in Z$, $p(z)\geq 0$. A sufficient condition is that $p(z)$ can be expressed as

\be
p(z)=\sum_i\tilde{g}_i(z)\tilde{f}_i(z)+\sum_i f^i(z)^2+\sum_{i,j} f^i_j(z)^2g_j(z),
\label{SOS}
\ee
\noindent where $\tilde{f}_i(z), f_i(z), f_{ij}(z)$ are polynomials on the vector variable $z$. The above is called a \emph{Sum of Squares (SOS) decomposition} for polynomial $p(z)$. Deciding whether $p(z)$ admits a SOS with polynomials $\tilde{f}_i(z),f_i(z),f_{ij}(z)$ of bounded degree can be cast as a semidefinite program (SDP) \cite{lasserre}. Moreover, as shown in \cite{lasserre}, as long as $Z$ is a bounded set and $p(z)$ is strictly positive in $Z$, such a decomposition always exists \footnote{More precisely, a sufficient condition for the existence of an SOS decomposition for a positive polynomial $p(z)$ is that, for some $K>0$, the polynomial $K-\sum_iz_i^2$ admits a SOS decomposition. The latter, in turn, implies that $Z$ is bounded.}.
\begin{widetext}
A weaker criterion for positivity consists in demanding the existence of a decomposition\footnote{This adaptation of the more common Lasserre-Parrilo hierarchy relies on the Schmuedgen Positivstellensatz \cite{schmuedgen} instead and is employed here to obtain a better numerical performance in the problems considered later.} 

\be
\begin{split}
p(z)=\sum_i\tilde{g}_i(z)\tilde{f}_i(z)+\sum_i f^i(z)^2 +\sum_{i,j} f^i_{j}(z)^2g_j(z) +\sum_{i}\sum_{j>k} f^i_{jk}(z)^2g_j(z)g_k(z)+\sum_{i}\sum_{j>k>l} f^i_{jk}(z)^2g_j(z)g_k(z)g_l(z)+...
\label{SOS_boost}
\end{split}
\ee
\noindent If all terms in the decomposition have degree $n$ or less on its variables, we say that $p(z)\in SOS^n(g,\tilde{g})$.

We next sketch how to apply either hierarchy to tackle problem (13) in the main text. Let $(g^k,\tilde{g}^k)$ be the polynomials defining the sets $S_k, S_{k+1}, H_k, \Lambda$, as well as the condition $s_{k+1}=f_k(s_k,h_k,\lambda)$. Take a sufficiently high natural number $n$ and consider the following SDP:
\begin{align}
    &\nu^n:=\min_{V_1,...,V_N,\nu} \nu\nonumber\\
    \mbox{subject to }&V_1,...,V_N,\mbox{polynomials of degree $n$},\nonumber\\
    &V_{N}(s_N,\lambda)- r_N(s_N,h,\lambda)\in SOS^n(g^N,\tilde{g}^N)\nonumber\\
    &V_k(s_k,\lambda)- r_k(s_k,h,\lambda)- V_{k+1}(f_k(s_k, h, \lambda),\lambda)\in SOS^n(g^k,\tilde{g}^k),\nonumber\\
    &\nu- V_1(s_1,\lambda)\in SOS^n(g^1,\tilde{g}^1).
    \label{dual_poly}
\end{align}
From the discussion above, it follows that $\nu^n\geq \nu^\star$. Moreover, if $S_k,H_k,\Lambda$ are bounded, $\lim_{n\to\infty}\nu^n=\nu^\star$.

\end{widetext}

The same idea can be used to model constraints (47) and (49) of the main text, by demanding the functions $\{W_{-j}\}_{j=0}^k$, $W$ to be polynomials of $\alpha,s,\lambda$. Because of the way we defined $\alpha$, this variable is bounded, and hence the Lasserre-Parrilo hierarchy is guaranteed to converge. Enforcing constraints of the form (45) of the main text, where one of the polynomials is evaluated with $\frac{\alpha}{\alpha+1}$, can be dealt with by introducing a new variable $\beta$ representing $\frac{\alpha}{\alpha+1}$, together with the polynomial constraint $\beta(\alpha+1)=\alpha$. The latter, in turn, can be modeled though the conditions $\beta(\alpha+1)-\alpha\geq 0$, $\alpha-\beta(\alpha+1)\geq 0$.

In order to represent the conditions in Eqs.~\eqref{SOS} and \eqref{SOS_boost} in a SDP, we need to represent polynomials in a vector form. This is done, for instance, by fixing a basis of all monomials $\{ m_i(z) \}_{i=1}^{N}$ up to the degree that is necessary to describe the polynomials in our problem. Let us denote this basis as ${\rm MB}_{\rm in}$. Each polynomial, then, is a linear combination of these monomials. This allows us to write the SOS condition in a SDP form, such as, e.g.,
\begin{equation}
\begin{split}
     \sum_{i} f^i(z)^2g(z) = \sum_{k,l} Z_{kl} m_k(z) m_l(z) g(z)
\end{split}
\end{equation}
for $Z$ an appropriately chosen positive-semidefinite matrix. Similarly, we have
\begin{equation}
    \tilde{g}(z)\tilde{f}(z) = \sum_i \alpha_i g(z) m_i(z),
\end{equation}
for some vector $\mathbf{\alpha}$. The results of these operations must, then, be written again in a vector form in terms of a new basis of monomial, which we denote as ${\rm MB}_{\rm out}$. The two bases, ${\rm MB}_{\rm in}$ and ${\rm MB}_{\rm out}$ allow us to represent all constraints as linear and positive-semidefinite constraints. See \cite{lasserre} for more details and the next sections for concrete implementations of this optimization problem.

\section{Implementation of optimization of sequential models}\label{app:implementation0}

\subsection{Probability bounds for the one-tick sequence}\label{app:implementation1}
In order to implement the optimization problem in Eq.~\eqref{dual_poly}, we need first to fix the relevant variables for describing our problem. To keep the notation lighter, we discuss the case $d=2$, the case $d=3$ can be easily understood as its generalization. To describe the states of the automaton we use the variables
\begin{equation}
s_k=(p_k(t_k=1,\sigma_{k+1}))_{\sigma},
\end{equation}
with constraints $s_k^0,s_k^1\geq 0$, $1-s_k^0-s_k^1\geq 0$, which gives  two variable for each $k$. Notice that there is no need to describe the case $t_k=0$ as that would not contribute in the calculation of the probability of successfully outputing the one-tick sequence.

Similarly, for the transition parameters $\lambda$ one has just four variables: $\lambda:=(P(\sigma',b=0|\sigma):\sigma,\sigma'=0,1)$, 
since for the case $b=1$ we do not need to keep track of the internal state and the probability can be recovered by the normalization condition. They satisfy the constraints
\begin{align}
&P(\sigma',b=0|\sigma)\geq 0, \mbox{ for all } \sigma,\sigma'=0,1,\nonumber\\
&1-\sum_{\sigma'}P(\sigma',b=0|\sigma)\geq 0, \mbox{ for }\sigma=0,1.
\end{align}

Finally, there are the equality constraints that come from the the equation of motion 
\begin{equation}
p_{k+1}(t_k=1,\sigma_{k+1})=\sum_{\sigma_k}p_{k}(t_{k-1}=1,\sigma_{k})P(\sigma_{k+1},0|\sigma_k),
\end{equation}
for $k=1,...,L-1$. Finally, the initial state is fixed to be $s_1=(1,0)$.

This gives at each time step (except the initial and final one) four variables for the state $s_{k}^{0}, s_k^{1}, s_{k+1}^0, s_{k+1}^1$ and four variables for the transition parameters $\{\lambda_{\sigma,\sigma'}\}_{\sigma, \sigma' =0,1}$. These are the variables on which the polynomials appearing in Eq.~\eqref{eq:def_Z_set} are built. 

To impose the problem's constraints, we need the two bases of monomials ${\rm MB}_{\rm in}$ and ${\rm MB}_{\rm out}$ discussed in \ref{sec:polynomials}. These allow us to write the polynomials $g_i,\tilde g_i$ appearing in the constraints in \cref{eq:def_Z_set} as well as the polynomials in \cref{SOS} and (possibly) \cref{SOS_boost}, as vectors. For the specific calculations of the probability for the one-tick sequences, Figs.~5 and 6 from the main text, it was sufficient to use Putinar's positivstellensatz, i.e., \cref{SOS}. The input and output bases of monomials are chosen as follows.
\begin{itemize}
\item Construct three sets: monomials of degree two in the variables $(s_k^0,s_k^1)$, in the variables $(s_{k+1}^0,s_{k+1}^1)$ and in the variables $\{\lambda_{\sigma \sigma'}\}_{\sigma \sigma'}$.
\item Combine these three sets together and add monomials of the form $s_k^i \lambda_{\sigma \sigma'}$, for $\sigma,\sigma', i=0,1$ and of the form $s_{k+1}^i \lambda_{\sigma \sigma'}$ for $\sigma,\sigma', i=0,1$. This operation concludes the construction of the input basis ${\rm MB}_{\rm in}$. 
\item Construct in a similar way two other bases, ${\rm MB}_k$ and ${\rm MB}_{k+1}$, defined as follows. ${\rm MB}_k$ is the union of monomials of degree two in the variables, respectively, $(s_k^0,s_k^1)$ and $\{\lambda_{\sigma \sigma'}\}_{\sigma \sigma'}$, together with monomials of the form  $s_k^i \lambda_{\sigma \sigma'}$, for $\sigma,\sigma', i=0,1$. ${\rm MB}_{k+1}$ has the same construction with $s_k^i$ substituted by $s_{k+1}^i$. In other words, these bases are defined in the same way as ${\rm MB}_{\rm in}$, but this time excluding all monomials containing $s_{k+1}$ or containing $s_k$. 
\item The output basis is given by the tensor product of three copies of the input basis. In other word, each monomial of this list is a product of three monomials appearing in the input list. This concludes the construction of the output basis ${\rm MB}_{\rm out}$.
\end{itemize}

Once we have these bases, we want to implement the constraints
\begin{equation}
V_k(s_k,\lambda) - V_{k+1}(s_{k+1},\lambda) \in SOS^\ell(g_i, \tilde{g}_i),
\end{equation}
where $SOS^\ell(g_i, \tilde{g}_i)$ denotes the fact that the SOS constraints are implemented up to a level $\ell$ which depends on the choice of the input (and possibly output) basis above. 
The function  $V_k(s_k,\lambda)$ is a polynomial given by a linear combination of monomials in the list ${\rm MB}_k$. Similarly, $V_{k+1}(s_{k+1},\lambda)$ is a linear combination of elements of ${\rm MB}_{k+1}$.
The condition of belonging to $SOS^\ell(g_i, \tilde{g}_i)$ means that $V_k(s_k,\lambda) - V_{k+1}(s_{k+1},\lambda)$ can be written in the form of \cref{SOS} for properly chosen polynomials $\tilde{f}_i, f_i, f_{ij}$. In our implementation, $\tilde{f}_i, f_i, f_{ij}$ are convex combinations of monomials in the ${\rm MB}_{\rm in}$ basis. For the way in which ${\rm MB}_{\rm out}$ is chosen, all possible monomials arising from the product of
 $\tilde{f}_i, f_i, f_{ij}$ with $\tilde{g}_i, g_i$ appear in ${\rm MB}_{\rm out}$. 
 
For the case $d=2$, the SDP is run with CVXPY \cite{cvxpy} and MOSEK \cite{mosek} up to $L=50$. The values coincide with the explicit model found in \cite{tick_sequence_paper} up to the fifth decimal digit.  For the case $d=3$, the SDP is run with  CVXPY \cite{cvxpy} and  SCS \cite{scs} up to $L=10$. The numerical solutions coincide with the optimal model found in \cite{tick_sequence_paper} and the values explicitly reported in \cite{Vieira_Budroni}, up to the fourth decimal digit. For each explicit solutions provided by the solvers we verified that indeed the linear and positivity constraints are satisfied up to numerical precision.

\subsection{Extracting value functions for the one-tick-sequences and finding optimal automata} \label{app:value_fct}

In order to extract value functions that are suitable for further treatment, simplifications of the output of the polynomial optimization (from App.~\ref{app:implementation1}) is desirable. Here it is useful to impose any symmetries and additional knowledge regarding the structure of the problem at hand. This may allow us, for instance, to impose that certain monomials share the same coefficient and thus to reduce the number of independent variables.   

In cases where there are redundant variables involved, we can further realize this by inspecting the value functions. If we had for instance chosen an implementation of the problem that additionally involves the variables $P(0,1|0)$ and $P(0,1|1)$, we could find value functions that don't involve and of the monomials containing them. The other way around, we can check that we indeed need all variables used in the implementation of App.~\ref{app:implementation1}, more precisely, while we could remove e.g.\ $P(0,0|1)$ or $P(1,0|1)$ in the $d=2$, $L=3$ case, these are needed for higher $L$.

As the numerical optimization tends to assign non-zero values to all optimization variables, it is further useful to simplify the value functions by eliminating irrelevant monomials, meaning setting their coefficients to zero. It turns out that this works well for the problem at hand when following a heuristic procedure by first computing the optimum $P^{\rm d *}_{\rm max}$ and then iterating the following two steps:
\begin{itemize}
\item Fix a threshold $\nu$ and set all coefficients of the $V_k$ smaller than $\nu$ to zero for the next optimization. If there are no such coefficients, increase $\nu$.
\item Solve the problem of optimizing $P^{\rm d}_{\rm max}$ and check that the optimal value is still $P^{\rm d *}_{\rm max}$. If not, put coefficients back until recovering $P^{\rm d *}_{\rm max}$.
\end{itemize}

After this procedure the polynomials can then further be adjusted manually, e.g.\ by requiring that certain coefficients are equal.

\subsection{Entanglement detection in the automaton-guided GHZ game}\label{app:implementation2}

\subsubsection{Implementation of a polynomial optimization problem to bound the performance of separable states in the automaton-guided GHZ game}

The automaton-guided GHZ state is conducted using a 4-state automaton that is updated by measuring a qubit state in each round. 
Due to normalization, this means that we need polynomials in $3$ variables for the 4-state automaton, $\{p_k(t)\}_{t\in \{1,-1,i\}}$ , and $2$ variables for the Bloch vector of the state in each round, $n_1,n_2$, where $n=(n_1,n_2,n_3)$ denotes the Bloch vector of the state in round $k$, (except in the very last round, where we generate the reward as a function of $s_{N+1}$ in a way that updates this through multiplication with the success probability of the last measurement in each case. Due to the loss of normalization the last step thus needs one additional polynomial variable).

The value functions are constructed following an adaptation of Schm\"udgen's positivstellensatz in Eq.~\eqref{SOS_boost}. The polynomials $V_k$ are taken to be of mixed degree $2$, i.e., terms of order up to $p_k(t)p_k(t') n_s n_{s'}$, $t,t'\in  \{1,-1,i\}$, $s,s'\in \{1,2 \}$. This means that for an inequality involving $V_k$ and $V_{k+1}$ we consider polynomials of degree up to 2 in each of $\{p_k(t)\}_{t\in \{1,-1,i\}}$, $\{p_{k+1}(t)\}_{t\in \{1,-1,i\}}$, $n_1,\ n_2$, $n'_1,  \ n'_2$ (where $n=(n_1,n_2,n_3)$ denotes the Bloch vector of the state in round $k$ and  $n'=(n'_1,n'_2,n'_3$) is the Bloch vector in round $k+1$). This means that we consider monomials of the form $p_k(t)p_k(t') p_{k+1}(t'')p_{k+1}(t''') n_s n_{s'}n'_{s''} n'_{s'''}$. 
The polynomials $g_i$ are here given by positivity of the  probabilities of the automaton ($p_k(t=1) \geq 0, p_k(t=-1) \geq 0, p_k(t=i) \geq 0, 1-p_k(t=1)-p_k(t=-1)- p_k(t=i) \geq 0$, and the same for $p_{k+1}$) as well as normalization of the Bloch vectors of ($1-n_1^2-n_2^2 \geq 0$ and  $1-n_1'^2-n_2'^2 \geq 0$). The equation of motion (see (5) from the main text) for this example defines the $\tilde{g}_i$.

The equation of motion is in this case of degree $1$ in $p_k$, $p_{k+1}$, $\rho_k$, so that to exhaust the degree we have at our disposal we can multiply it with polynomials of degree $1$ in $p_k$, $p_{k+1}$, $\rho_k$ and degree $2$ in $\rho_{k+1}$. The squared polynomials multiplying the products $g=g_jg_kg_l...$ are chosen to be linear combination of at most degree $1$ in the variables that are not part of $g$. 

\subsubsection{Resolution of numerical problems in the optimization of the automaton-guided GHZ game}

We implemented the above problem using YALMIP~\cite{yalmip} and the solver MOSEK~\cite{mosek} for solving the semidefinite programs. 
To confirm that our results remain trustworthy for large $N$, we further extracted all value functions from the solution and then checked each inequality $V_N \geq r_N$, $V_k-V_{k+1} \ \forall k$, $\nu \geq V_1$ separately. For this purpose, we set up a constrained polynomial optimization problem (constrained by positivity constraints and equation of motion) to maximize the violation of each inequality separately. As opposed to the case of the original optimization we did not set up the semidefinite formulation manually but used the SOLVEMOMENT functionality of YALMIP (setting level=$3$ except for k=30 where we used level=$2$) and MOSEK for this step. The sum of the separate violations led to the error estimate for our results (see also Sec.~VC from the main text). 

For $N=35$ we obtained a value of $0.5025$, where the upper bound was computed following the procedure~(50) applied to the finite $N$ case with ${\nu}=0.5002$. 

{
\subsection{Entanglement detection by means of bit-sequence protocols}\label{app:implementation3}

The bit-sequence protocol is performed using a 2-state automaton that is updated by measuring a qubit state in each round. This automaton saves the outcome of the last measurement, based on which the next one is decided. 
Due to normalization, this means that we need polynomials in $1$ variable for the 2-state automaton, $\{p_k(t)\}_{t\in \{1\}}$, and $2$ variables for the Bloch vector of the state in each round, $n_1,n_3$, where $n=(n_1,n_2,n_3)$ denotes the Bloch vector of the state in round $k$~\footnote{For convenience, we choose the first and third entry here since we perform Pauli measurements and thus this way the equation of motion can be formulated more directly.}. 

The value functions are constructed following an adaptation of Schm\"udgen's positivstellensatz in Eq.~\eqref{SOS_boost}. The polynomials $V_k$ are taken to be of mixed degree $2$, i.e., terms of order up to $p_k(t)p_k(t') n_s n_{s'}$, $t,t' = 1$, $s,s'\in \{1,2 \}$. This means that for an inequality involving $V_k$ and $V_{k+1}$ we consider polynomials of degree up to 2 in each of $\{p_k(t)\}_{1}$, $\{p_{k+1}(t)\}_{1}$, $n_1,\ n_2$, $n'_1,  \ n'_2$ (where $n=(n_1,n_2,n_3)$ denotes the Bloch vector of the state in round $k$ and  $n'=(n'_1,n'_2,n'_3$) is the Bloch vector in round $k+1$) This means that we consider monomials of the form $p_k(t)p_k(t') p_{k+1}(t'')p_{k+1}(t''') n_s n_{s'}n'_{s''} n'_{s'''}$. 
The polynomials $g_i$ are here given by positivity of the  probabilities of the automaton ($p_k(t=1) \geq 0,  1-p_k(t=1) \geq 0$, and the same for $p_{k+1}$) as well as normalization of the Bloch vectors of ($1-n_1^2-n_2^2 \geq 0$ and  $1-n_1'^2-n_2'^2 \geq 0$). The equation of motion (see (5) from the main text) for this example defines the $\tilde{g}_i$.
}

\section{Lower bound on the maximum winning probability in the automaton-guided GHZ game with separable states}\label{app:lower_bound_GHZ}

The winning probability in the $N$-round GHZ game can be lower bounded by computing the score of any $N$-partite separable state. In the following we show that the optimal separable strategy consists in preparing the state $\ket{+}^{\otimes n}$, which indeed recovers our bounds (up to numerical precision). 

For this state, the measurement outcome for measurement $1$ is $1$ in each round, while for measurement $2$, both outcomes occur with probability $\frac{1}{2}$. This means that the states of the automaton in round $k$ have probabilities ${\bf p_k}=(p_k(1), p_k(-1), p_k(i), p_k(-i))=(1/4+1/2^n,1/4-1/2^n, 1/4,1/4)$, which is computed as ${\bf p_k}=M^{k-1} {\bf p_1}$, where ${\bf p_1}=(1,0,0,0)$ is the initial state of the sequential model and 
$$M=\left(\begin{array}{rrrr} 
\frac{1}{2} & 0 & \frac{1}{4} & \frac{1}{4} \\ 
0 & \frac{1}{2} & \frac{1}{4} & \frac{1}{4} \\
\frac{1}{4} & \frac{1}{4} & \frac{1}{2} & 0 \\
\frac{1}{4} & \frac{1}{4} & 0 & \frac{1}{2} \\
\end{array} \right).$$
The final winning probability is $p^{\rm sep}(\ket{+})=p_N(1)+ \frac{1}{2}(p_N(i)+p_N(-i))$.

\section{Computing the gradient under non-uniqueness of the optimal value functions}
\label{app:gradient}
The derivation of Eq.~(66) from the main text relies on the assumption that the optimal slack variables $V:=\{V_k\}_k$ for policy $x=\bar{x}$ are unique. In that case, it seemed natural to postulate that they are differentiable at $x=\bar{x}$, i.e., that the optimal slack variables of the perturbed problem, with policy $x=\bar{x}+\delta x$, should be of the form of (60) from the main text.

If there exists more than one minimizer for problem (59) from the main text, though, then it is still natural to postulate that there exists at least \emph{one} differentiable minimizer. However, there is no reason why the value of this minimizer should coincide with the value returned by the numerical solver. In other words, given a minimizer $V$ of (59), the optimization problem (66) in general returns an upper bound on $D^{+}\nu(\bar{x})$, rather than its exact value.
\begin{widetext}
To tackle the possible non-uniqueness of the optimal slack variables $V$, consider, for fixed $\xi$, the following variant of problem (66):
\begin{align}
    &\min_{\mu,V,\hat{V}} \mu\nonumber\\
    \mbox{such that }&\hat{V}_N(s_k,\lambda)\geq_Q \frac{\partial}{\partial x} r_N(s_N,h_N,\lambda,\bar{x})+\xi_N\left(V_N(s_k,\lambda)-r_N(s_N,h_N,\lambda,\bar{x})\right),\nonumber\\
    &\hat{V}_k(s_k,\lambda)\geq_Q \frac{\partial}{\partial x} r_k(s_k,h,\lambda,\bar{x})+\hat{V}_{k+1}(s_{k+1},\lambda)+\sum_{\sigma}\frac{\partial V_{k+1}(s_{k+1},\lambda)}{\partial s_{k+1}^\sigma}\frac{\partial f^\sigma_k(s_k,h_k,\lambda,x)}{\partial x}+\nonumber\\
    &+\xi_k\left(V_k(s_k,\lambda)-r_k(s_k,h_k,\lambda,\bar{x})-V_{k+1}(s_{k+1},\lambda)\right),\nonumber\\
    &\mu\geq_Q \hat{V}_1(s_1,\lambda)+\xi_1\left(\nu-V_1(s_1,\lambda)\right),\nonumber\\
    &V_{N}(s_N,\lambda)\geq_Q r_N(s_N,h,\lambda, \bar{x}),\nonumber\\
    &V_k(s_k,\lambda)\geq_Q r_k(s_k,h,\lambda,\bar{x})+ V_{k+1}(f_{k}(s_k,h_k,\lambda),\lambda),\nonumber\\
    &\nu(\bar{x})\geq_Q V_1(s_1,\lambda).
    \label{perturbed2}
\end{align}
\noindent This problem optimizes, not just over $\hat{V}$, but also over the minimizers $V$ of problem (59).

\end{widetext}
A coordinate descent-based heuristic to compute the limit $D^{+}\nu(\bar{x})$ is thus the following: given a minimizer $V^{(j)}$ of the unperturbed problem, we solve problem (66), hence obtaining the slack variables $\xi^{(j)}$. Next, we solve problem in Eq.~(\ref{perturbed2}) for $\xi=\xi^{(j)}$, obtaining the new unperturbed solution $V^{(j+1)}$. Starting from an initial solution of the unperturbed problem $V^{(0)}$, we hence generate a sequence of minimizers $V^{(1)}, V^{(2)},...$ for problem (59). The corresponding sequence of solutions of problem (66), with $V=V^{(1)}, V^{(2)},...$, is obviously decreasing, and, if the initial seed $V^{(0)}$ is close enough to the optimal solution, one would expect it to converge to $D^{+}\nu(\bar{x})$. 

Since the estimation of $D^{+}\nu(\bar{x})$ requires solving a non-convex optimization problem, it is possible that the solution reached through coordinate descent is not optimal. A simple consistency check to verify that coordinate descent did not get stuck in a local minimum consists in computing the converse limit $D^{-}\nu(\bar{x}):= \lim_{\delta_{x}\to 0^{+}}\frac{\nu(\bar{x})-\nu(\bar{x}-\delta \ket{i})}{\delta x}$ and see if both quantities coincide (they should, if $\nu(x)$ is differentiable in $x=\bar{x}$). It is not difficult to adapt the derivation of Eqs. (66), (\ref{perturbed2}) to estimate $D^{-}\nu(\bar{x})$ instead of $D^{+}\nu(\bar{x})$.

\bibliography{seq_references}

\end{document}